\documentclass[11pt]{article}
\usepackage[margin=1in]{geometry}

\usepackage[utf8]{inputenc}
\usepackage[breakable, theorems, skins]{tcolorbox}
\tcbset{enhanced}
\allowdisplaybreaks

\usepackage{amsfonts}

\usepackage{parskip}
\usepackage{multirow}

\usepackage[suppress]{color-edits}
\addauthor{gl}{purple}
\addauthor{nsm}{red}
\addauthor{yy}{blue}

\usepackage{enumitem}

\usepackage{amsmath,amssymb,amsthm}
\usepackage{bm}
\usepackage{amsmath}[fleqn]
\usepackage{amsthm,amssymb}
\usepackage{xcolor}
\usepackage{mathtools}
\usepackage{enumitem}
\usepackage{bbm}
\usepackage{semantic}
\usepackage{listings}
\usepackage{courier}
\usepackage{tikz}
\usetikzlibrary{calc}
\usepackage{color}
\usepackage{fancyhdr}
\usepackage{lastpage}
\usepackage{thm-restate}

\definecolor{codegreen}{rgb}{0,0.6,0}
\definecolor{codegray}{rgb}{0.5,0.5,0.5}
\definecolor{codepurple}{rgb}{0.58,0,0.82}
\definecolor{backcolour}{rgb}{0.95,0.95,0.92}
 
\lstdefinestyle{mystyle}{
    backgroundcolor=\color{backcolour},   
    commentstyle=\color{codegreen},
    keywordstyle=\color{magenta},
    numberstyle=\tiny\color{codegray},
    stringstyle=\color{codepurple},
    basicstyle=\footnotesize,
    breakatwhitespace=false,         
    breaklines=true,                 
    captionpos=b,                    
    keepspaces=true,                 
    numbers=left,                    
    numbersep=5pt,                  
    showspaces=false,                
    showstringspaces=false,
    showtabs=false,                  
    tabsize=2,
    basicstyle=\footnotesize\ttfamily
}

\lstset{style=mystyle}

\definecolor{asparagus}{rgb}{0.53, 0.66, 0.42}
\definecolor{sacramentostategreen}{rgb}{0.0, 0.3, 0.15}
\definecolor{teal}{rgb}{0.0, 0.5, 0.5}
\definecolor{forestgreen}{rgb}{0.13, 0.6, 0.13}

\usepackage{hyperref}
\hypersetup{
    colorlinks=true,
    linkcolor=sacramentostategreen,
    filecolor=sacramentostategreen,
    citecolor=teal,
    urlcolor=teal,
}

\let\Pr\relax
\DeclareMathOperator*{\Pr}{\mathrm{Pr}}

\newcommand{\eps}{\epsilon}
\renewcommand{\S}{\mathbb{S}}

\newcommand{\paren}[1]{\left(#1\right)}

\newcommand{\sqb}[1]{\left[#1\right]}

\newcommand{\ceil}[1]{\left\lceil#1\right\rceil}

\newcommand{\myfunc}[3]{#1\colon#2\to#3}



\newtheorem{theorem}{Theorem}
\newtheorem{corollary}[theorem]{Corollary}

\newtheorem{lemma}[theorem]{Lemma}
\newtheorem{definition}{Definition}

\newtheorem{fact}[theorem]{Fact}

\newtheorem{problem}{Problem}

\numberwithin{exercise}{section}




\newcommand{\exv}[1]{\E\sqb{#1}}

\newcommand{\expv}[1]{\mathsf{exp}\paren{#1}}
\newcommand{\prv}[1]{\Pr\sqb{#1}}
\newcommand{\prvv}[2]{\underset{#1}{\Pr}\sqb{#2}}
\newcommand{\logv}[1]{\log\paren{#1}}

\renewcommand{\vec}[1]{\overrightarrow{#1}}

\makeatletter
\newcommand{\vo}{\vec{o}\@ifnextchar{^}{\,}{}}
\makeatother

\usepackage{tcolorbox}


\usepackage{algorithm}
\usepackage{algorithmic} 

\newlength\tindent
\setlength{\tindent}{\parindent}
\setlength{\parindent}{0pt}

\numberwithin{equation}{section}


\def\to{\rightarrow}
\def\eps{\varepsilon}
\def\epsilon{\varepsilon}

\def\eps{\epsilon}

\def\phi{\varphi}



\newcommand{\R}{{\mathbb R}}
\newcommand{\E}{{\mathbb E}}

\newcommand{\N}{{\mathbb{N}}}

\newcommand{\indicator}[1]{\mathbbm{1}\inbraces{#1}}


\renewcommand{\Pr}{\mathsf{Pr}}



\usepackage{nicefrac}

\let\nfrac=\nicefrac


\newcommand{\abs}[1]{\ensuremath{\left\lvert #1 \right\rvert}}

\newcommand{\norm}[1]{\ensuremath{\left\lVert #1 \right\rVert}}

\newcommand{\ip}[1]{\left\langle #1 \right\rangle}


\newfont{\inhead}{eufm10 scaled\magstep1}

\newcommand{\suchthat}{{\;\; : \;\;}}


\newcommand{\argmax}{\mathrm{argmax}}
\newcommand{\argmin}{\mathrm{argmin}}



\newcommand{\inparen}[1]{\left(#1\right)}             
\newcommand{\inbraces}[1]{\left\{#1\right\}}           
\newcommand{\insquare}[1]{\left[#1\right]}             




\def\to{\rightarrow}
\def\eps{\varepsilon}
\def\epsilon{\varepsilon}

\def\eps{\epsilon}

\def\phi{\varphi}

\newcommand{\cA}{\mathcal{A}}
\newcommand{\cB}{\mathcal{B}}

\newcommand{\cD}{\mathcal{D}}

\newcommand{\cM}{\mathcal{M}}

\newcommand{\cX}{\mathcal{X}}


\newcommand{\mI}{\mathbf{I}}


\newcommand{\ve}{\bm{e}}

\newcommand{\vg}{\bm{g}}

\newcommand{\vr}{\bm{r}}
\newcommand{\vs}{\bm{s}}

\newcommand{\vx}{\bm{x}}
\newcommand{\vy}{\bm{y}}
\newcommand{\vz}{\bm{z}}

\usepackage{multirow}
\usepackage{soul}

\usepackage{thmtools}
\usepackage{thm-restate}

\definecolor{ForestGreen}{RGB}{34, 139, 34}


\usepackage[
backref=true,
natbib=true,
style=trad-alpha,
maxalphanames=8,
sorting=nyt
]{biblatex}
\addbibresource{mybib.bib}



\newcommand{\xstar}{\vr}
\newcommand{\zstar}{\vx^{\star}} 

\newcommand{\xt}[1]{\vx_t^{(#1)}}
\newcommand{\actionset}{\cX}  
\newcommand{\xplus}{\vx^{+}}




\newif\ifnotes
\makeatletter
\newcommand{\nnote}[1]{\@bsphack\ifnotes{$\ll$\textsf{\color{blue} Naren: { #1}}$\gg$}\fi\@esphack}
\newcommand{\gl}[1]{\@bsphack\ifnotes{\color{red}{[GL: #1]}}\fi\@esphack}
\newcommand{\yy}[1]{\@bsphack\ifnotes{\color{violet}{[CY: #1]}}\fi\@esphack}
\newcommand{\as}[1]{\@bsphack\ifnotes{\color{blue}{[AS: #1]}}\fi\@esphack}
\newcommand{\ab}[1]{\@bsphack\ifnotes{\color{mahogany}{[AB: #1]}}\fi\@esphack}
\newcommand{\mg}[1]{\@bsphack\ifnotes{\color{purple}{[MG: #1]}}\fi\@esphack}
\makeatother
\notesfalse

\newcommand\numberthis{\addtocounter{equation}{1}\tag{\theequation}}

\begin{document}


\author{
Avrim Blum
\thanks{Toyota Technological Institute at Chicago. Email: \url{avrim@ttic.edu}.}
\and
Meghal Gupta
\thanks{UC Berkeley. Email: \url{meghal@berkeley.edu}.}
\and
Gene Li
\thanks{Toyota Technological Institute at Chicago. Email: \url{gene@ttic.edu}.}
\and
Naren Sarayu Manoj
\thanks{Toyota Technological Institute at Chicago. Email: \url{nsm@ttic.edu}.}
\and 
Aadirupa Saha\thanks{Toyota Technological Institute at Chicago (currently at Apple Research). Email: \url{aadirupa@ttic.edu}.}
\and
Yuanyuan Yang
\thanks{University of Washington. Email: \url{yyangh@cs.washington.edu}.}
}

\title{Dueling Optimization with a Monotone Adversary}
\date{\today}

\notestrue 

\maketitle

\begin{abstract}
We introduce and study the problem of \textit{dueling optimization with a monotone adversary}, which is a generalization of (noiseless) dueling convex optimization. The goal is to design an online algorithm to find a minimizer $\bm{x}^{\star}$ for a function $f\colon \mathcal{X} \to \mathbb{R}$, where $\mathcal{X} \subseteq \mathbb{R}^d$. In each round, the algorithm submits a pair of guesses, i.e., $\bm{x}^{(1)}$ and $\bm{x}^{(2)}$, and the adversary responds with \textit{any} point in the space that is at least as good as both guesses. The cost of each query is the suboptimality of the worse of the two guesses; i.e., ${\max} \left( f(\bm{x}^{(1)}), f(\bm{x}^{(2)}) \right) - f(\bm{x}^{\star})$. The goal is to minimize the number of iterations required to find an $\eps$-optimal point and to minimize the total cost (regret) of the guesses over many rounds. Our main result is an efficient randomized algorithm for several natural choices of the function $f$ and set $\mathcal{X}$ that incurs cost $O(d)$ and iteration complexity $O(d\log(1/\varepsilon)^2)$. Moreover, our dependence on $d$ is asymptotically optimal, as we show examples in which any randomized algorithm for this problem must incur $\Omega(d)$ cost and iteration complexity.
\end{abstract}
\newpage
\section{Introduction}
A growing body of literature studies learning with preference-based feedback \citep{bv06, shivaswamy2011online}, with tremendous empirical success in recommendation systems, search engine optimization, information retrieval, and robotics. More recently, preference-based feedback has received a lot of attention as a mechanism to train large language models \citep{ouyang2022training}. Moreover, in recommender systems \citep{bobadilla2013recommender}, a natural approach is to learn from users' preferences relations on a set of recommended items and update the system's belief for better future recommendations~\cite{j16} (e.g., given these items, which one do you prefer the most?).

Such preference-based feedback is not readily addressed by classical formulations for online decision making, such as bandits and reinforcement learning. In particular, algorithms for these problems rely on ordinal feedback per item (e.g., on a scale of 1 to 10, how much did the user like a particular item?). To address this, a long line of work studies the \emph{dueling bandit framework} for online decision making under pairwise/preference-based feedback. There exist efficient algorithms with provable guarantees for the standard multi-armed bandit setup \citep{yue2012k, ailon2014reducing, komiyama2015regret}, contextual bandits \cite{dudik2015contextual,saha2022efficient}, as well as dueling convex optimization \cite{jamieson2012query,skm21,saha2022dueling}, to name a few. The dueling bandit framework is especially applicable in settings where real-valued feedback is scarce or impossible to obtain, but preference-based feedback is readily available.

However, a key limitation of the dueling bandit framework is that the feedback that the learner receives is essentially ``in-list''. That is, the users are restricted to selecting items exclusively from the list of recommended items. This feedback model fails to capture the real-world scenarios where the users might select an out-of-list item they prefer. To illustrate, music streaming services like Spotify create personalized playlists for users. Concretely, each song can be encoded as a feature vector $\vx \in \R^d$, and the goal is to recommend the songs with the highest utility for a hidden, well-structured utility function of $\vx$. However, the users can also search for and play the songs they have a stronger preference (i.e., higher utility) than all recommendations. 

This out-of-list feedback model falls into a monotone adversarial framework (see the chapter by \citet{feige_2021}). In such models, an adversary is only allowed to make ``helpful'' changes. For example, in a graph clustering problem, the adversary is only allowed to add edges within communities and delete edges that cross communities (see, e.g., the chapter by \citet{moitra_2021}). In our setting, the adversary is only allowed to respond with an item that is at least as good as any recommended item. A clear adaptation of the dueling bandit framework to this new feedback type is not evident. 

\subsection{Problem statement}


As our main conceptual contribution, we introduce a theoretical formulation for this setting that we call \textit{dueling optimization with a monotone adversary}. As we will see, our formulation supports ``out-of-list'' feedback.

\begin{problem}[Dueling optimization with a monotone adversary]
\label{problem:local_rec_fixed_set} 
Let $\mathcal{X} \subseteq \R^d$ be a decision space, and let $\myfunc{f}{\mathcal{X}}{\R}$ be a cost function with an unknown global minimum $\zstar$. A learner interacts with an adversary over rounds $t = 1, 2, \dots$, where each round is of the following form.
\begin{enumerate}
    \item The learner proposes two points $\xt{1}, \xt{2} \in \mathcal{X}$.
    \item The adversary responds with a point $\xstar_t$ that satisfies
    $f(\xstar_t) \le \min \inbraces{f\inparen{\xt{1}}, f\inparen{\xt{2}}}\label{eq:valid_feedback}$.
\end{enumerate}
The goal is to design algorithms that:
\begin{enumerate}
    \item for some prespecified $\eps > 0$, minimize the number of iterations to find a point $\vx$ for which $f(\vx) - f(\zstar) \le \eps$;
    \item minimize the total \textit{cost} $\sum_{t = 1}^{\infty} \inparen{\max\inbraces{f\inparen{\xt{1}}, f\inparen{\xt{2}}} - f(\zstar)}$.
\end{enumerate}
\end{problem}

Note that in Problem \ref{problem:local_rec_fixed_set}, we are interested in both the iteration complexity and the total cost. The first objective is a standard metric for measuring the performance of an iterative optimization algorithm. The second objective is motivated by online settings in which a practitioner may wish to minimize the total regret (cost) of its recommendations over an indefinitely long interaction with a user. 
In fact, the algorithms we propose in this paper simultaneously achieve both small iteration complexity (for any choice of $\eps$) as well as total cost --- see our technical overview in Section \ref{sec:tech-overview} for more details.

Problem \ref{problem:local_rec_fixed_set} is a natural extension of (noiseless) dueling optimization \cite{jamieson2012query,skm21,saha2022dueling} to handle ``out-of-list'' responses, as in the Spotify recommendation example. The vanilla (noiseless) dueling optimization setup corresponds to the requirement that the user's response satisfies $\xstar_t \in \{\xt{1}, \xt{2}\}$. We allow the user to be potentially adversarial by allowing it to respond with any improvement to the learner's suggestions (in the sequel, we exclusively refer to the user as the adversary).

Even though the monotone adversary is only improving upon the learner's suggestions, existing algorithms for dueling optimization cannot be freely extended to handle the monotone feedback. At a high level, the difficulty arises from the fact that existing algorithms carefully select the queries $\xt{1}, \xt{2}$ so that learning whether $f(\xt{1}) > f(\xt{2})$ reveals information about the underlying $f$. However, a monotone adversary can return a point $\xstar_t$ that reveals no information about the relationship between $\xt{1}$ and $\xt{2}$.

To illustrate this point, consider a natural coordinate-wise binary search algorithm for the dueling optimization problem when $f(\vx) = \norm{\vx-\zstar}_2^2$ for some $\zstar \in \cB_2^d \coloneqq \inbraces{\vx \suchthat \norm{\vx}_2 \le 1}$. For coordinates $i = 1, \cdots, d$, query points of the form $\xt{1} = c_1 \cdot \ve_i$, $\xt{2} = c_2 \cdot \ve_i$ and progressively refine the values $c_1, c_2 \in \R$ to search for the value of $\zstar[i]$ (i.e., the $i$-th entry of $\zstar$). It is easy to show that this approach has a query complexity of $O\inparen{d \logv{\nfrac{1}{\eps}}}$ in the vanilla dueling optimization setting. However, a monotone adversary can return orthogonal responses of the form $\xstar_t = C\ve_j$ (where $j \ne i$ and $C$ is a constant) that do not allow the learner to search along the intended coordinate $i$. Furthermore, \citet{jamieson2012query} and \citet{skm21} give more sophisticated algorithms for the dueling optimization problem that inherently depend upon the ``in-list'' feedback, which clearly cannot apply to our setting. We therefore need novel insights to solve Problem \ref{problem:local_rec_fixed_set}.

\subsection{Our results}
We study Problem \ref{problem:local_rec_fixed_set} for various natural classes of functions $f$ and provide tight upper and lower bounds on the number of queries required to find an $\eps$-optimal point.

\paragraph{Upper bound for linear functions.} First, we study dueling optimization with a monotone adversary when the function $f$ is linear. This is a natural class to consider. In particular, an algorithm that solves Problem \ref{problem:local_rec_fixed_set} can be adapted to achieve constant regret for (noiseless) linear contextual bandits \cite{chu2011contextual}, where the reward function is $r(\vx) \coloneqq \ip{\vx, \vx^\star}$. Note that the key difference in the setup is that the learner does not get to observe the actual linear costs but instead only an improvement to the actions (points) that the learner selects. 

\begin{theorem}
\label{cor:bin_search_ip}
Let $\mathcal{X} = \mathbb{S}_2^d$, let $\zstar$ be such that $\norm{\zstar}_2 = 1$, and let $\myfunc{f}{\mathcal{X}}{\R}$ be $f(\vx) = -\ip{\vx,\zstar}$. Fix any $\eps > 0$. There exists an algorithm that, in the setting of Problem \ref{problem:local_rec_fixed_set}, with probability at least $1-\expv{-O(d)}$:
\begin{itemize}
    \item outputs a point $\vx$ satisfying $\ip{\zstar-\vx,\zstar} \le \eps$ within $O(d\logv{\nfrac{1}{\eps}}^2)$ iterations;
    \item incurs total cost $O(d)$.
\end{itemize}
Each pair of guesses at time $t$ can be computed in $O(d)$ time.
\end{theorem}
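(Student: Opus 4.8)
The plan is to reduce everything to tracking a single scalar potential and to force the monotone adversary to decrease it. Write $\delta(\vx) \coloneqq f(\vx) - f(\zstar) = 1 - \ip{\vx, \zstar} = \tfrac12 \norm{\vx - \zstar}_2^2 \in [0,2]$ for the suboptimality of a point $\vx \in \mathbb{S}_2^d$; the target is exactly $\delta(\vx) \le \eps$. The first move is to make the running best point improve \emph{for free}: in every round re-propose the adversary's previous response as one of the two guesses. If $\vr_t$ denotes that point (the response from round $t-1$) and $\vr'_t$ the other (``exploratory'') guess, then the validity condition on the feedback forces $\delta(\vr_{t+1}) \le \min\inbraces{\delta(\vr_t), \delta(\vr'_t)}$, so $\delta_t \coloneqq \delta(\vr_t)$ is non-increasing no matter what the adversary does, and a round makes guaranteed progress exactly when $\vr'_t$ strictly improves on $\vr_t$. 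Since the learner never observes objective values — or even whether a round made progress — the choice of $\vr'_t$ must be oblivious, and all progress has to come from this potential argument rather than from feedback.

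For the exploratory guess I would use a small random rotation of $\vr_t$ toward a uniformly random tangent direction: draw $\vg_t$ uniform on the unit sphere of $\vr_t^{\perp}$ and set $\vr'_t \coloneqq (\vr_t + \eta\, \vg_t)/\norm{\vr_t + \eta\, \vg_t}_2$ for a step size $\eta$. Decomposing $\zstar = (1-\delta_t)\vr_t + \beta_t \vu_t$ with $\vu_t \perp \vr_t$ a unit vector and $\beta_t = \sqrt{2\delta_t - \delta_t^2} \in [\sqrt{\delta_t}, \sqrt{2\delta_t}]$, one computes $\ip{\vr'_t, \zstar} = \bigl((1-\delta_t) + \eta\,\beta_t \gamma_t\bigr)/\sqrt{1+\eta^2}$ with $\gamma_t \coloneqq \ip{\vg_t, \vu_t}$. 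Two estimates then do all the work, both taking $\eta \asymp \sqrt{\delta_t / d}$ with a small absolute constant. \textbf{Safety:} since $\lvert\gamma_t\rvert \le 1$ we get $\delta(\vr'_t) \le \delta_t + \eta\sqrt{2\delta_t} + \eta^2/2 = (1 + O(1/\sqrt d))\,\delta_t$, so every round costs $\max\inbraces{\delta_t, \delta(\vr'_t)} = O(\delta_t)$. \textbf{Progress:} $\vr'_t$ strictly improves on $\vr_t$ as soon as $\gamma_t \gtrsim \eta/\sqrt{\delta_t} \asymp 1/\sqrt d$, and by a standard anticoncentration bound for a coordinate marginal of the uniform measure on the sphere, $\Pr[\gamma_t \ge c/\sqrt d] \ge q$ for absolute constants $c, q > 0$; on that event a short computation gives $\delta(\vr'_t) \le (1 - \Omega(1/d))\,\delta_t$. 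So each round is cheap and, with constant probability, shrinks the potential by a $(1 - \Omega(1/d))$ factor — and here is where monotonicity pays off, since such a random perturbation is a genuine threat with constant probability, so the monotone adversary is then forced to return something at least that good.

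What remains is to choose $\eta$ without knowing $\delta_t$, which I would handle by phases. Run phases $k = 1, 2, \dots, K$ with $K = O(\log(1/\eps))$; phase $k$ uses the fixed step $\eta^{(k)} \asymp \sqrt{2^{-k}/d}$ and lasts $T_k = \Theta(d + k)$ rounds, maintaining the invariant $\delta \le 2^{1-k} =: \bar\delta_k$ at the start of phase $k$ (the very first round, proposing an antipodal pair $\pm\vv$, already forces $\delta \le 1$). Whenever $\delta_t \in (\bar\delta_k/2,\, \bar\delta_k]$ the fixed step $\eta^{(k)}$ is within a constant factor of the calibrated value $\sqrt{\delta_t/d}$, so the Progress estimate applies round by round; a Chernoff/supermartingale argument over the $T_k$ rounds shows that, except with probability $2^{-k}\exp(-\Omega(d))$, either $\delta$ has already fallen below $\bar\delta_k/2$ or it did so after $\Omega(d)$ improvement rounds by the end of the phase (this is why $T_k$ needs the $\Theta(d)$ term; the additive $k$ just makes the failure probabilities summable). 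Monotonicity then keeps $\delta$ below $\bar\delta_k/2$, so the invariant propagates. Collecting: after phase $K$ we have $\delta \le \bar\delta_{K+1} \le \eps$; the number of iterations is $\sum_k T_k = O\bigl(d\log(1/\eps) + \log(1/\eps)^2\bigr) = O(d\log(1/\eps)^2)$; the failure probability is $\sum_k 2^{-k}\exp(-\Omega(d)) = \exp(-\Omega(d))$; and since the per-round cost in phase $k$ is $O(\bar\delta_k)$ while $T_k = O(d+k)$, the total cost telescopes as $\sum_k O\bigl((d+k)\,2^{-k}\bigr) = O(d)$. Each query is one Gaussian sample plus an $O(d)$-time rescaling. (For $d$ below an absolute constant the claims are vacuous and the instance is solved directly.)

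The crux — and the only genuinely delicate part — is exactly the lack of usable feedback: the adversary's response can be essentially uncorrelated with which of the two guesses was better, so the learner cannot observe the objective or detect progress, and standard binary-search / comparison-based strategies collapse (as the introduction notes). Everything routes through the two quantitative lemmas: the anticoncentration statement ``a random tangent perturbation of magnitude $\asymp\sqrt{\delta/d}$ increases $\ip{\cdot, \zstar}$ with constant probability'' — which is what forces the monotone adversary's hand — together with the matching smoothness bound controlling its cost. The main obstacle I anticipate is getting the absolute constants in these two lemmas compatible, so that one fixed $\eta^{(k)}$ is simultaneously safe \emph{and} progress-making across the whole scale $\delta \in (\bar\delta_k/2, \bar\delta_k]$, and ensuring the phase failure probabilities sum to $\exp(-\Omega(d))$ uniformly in $\eps$; the rest is bookkeeping.
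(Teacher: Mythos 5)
Your proposal is correct, and your exploratory guess is mechanically the same object as the paper's progress distribution: $\vr'_t=(\vr_t+\eta\vg_t)/\sqrt{1+\eta^2}$ is exactly their $\xplus=\sqrt{1-s^2}\,\vr_t+\vs$ after the reparametrization $s=\eta/\sqrt{1+\eta^2}$, $\vs=s\,\vg_t$; the ``always re-propose the previous response'' trick, the phase structure with geometric shrinkage, the per-phase Chernoff bound, and the union bound over phases all match their general meta-theorem. What you do genuinely differently is the step magnitude: you take $\eta\asymp\sqrt{\delta/d}$, while the paper uses the smaller $s\asymp\delta/\sqrt d$. Your larger step turns a successful round into a \emph{multiplicative} improvement $\delta_{t+1}\le(1-\Omega(1/d))\,\delta_t$ --- a $(p{=}1,\gamma,\rho)$-progress distribution in their terminology --- whereas the paper's construction only yields $f(\vx)-f(\xplus)\gtrsim\delta^{3/2}/d$, i.e.\ $p=1.5$. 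This is more than cosmetic: with $p=1$ the per-phase budget is $T_k=\Theta(d+k)$, giving $\sum_k T_k = O(d\log(1/\eps)+\log^2(1/\eps))\le O(d\log^2(1/\eps))$; with $p>1$ the paper's $T(i)$ carries a $(Cr^{i+1})^{-(p-1)}$ factor that grows geometrically in $i$, and the inequality $\sum_{j\le i}T(j)\lesssim d\,i^2$ asserted in their proof of the general theorem only holds at $p=1$, so the $p=1.5$ instantiation does not straightforwardly deliver the stated $O(d\log^2(1/\eps))$ iteration bound without extra work. Your $p=1$ analysis closes that gap directly, is self-contained (no detour through the generic progress-distribution theorem), and in fact gives a marginally sharper iteration count; the paper's route, in exchange, factors the argument through a reusable meta-theorem that also covers the smooth-and-P\L{} and $\ell_2$-distance cases.
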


We prove Theorem \ref{cor:bin_search_ip} in Section \ref{subsec:ip_proof}, and the cost is near-optimal with respect to $d$. 

\citet{gollapudi2021contextual} study a closely related setup that they call \emph{local contextual recommendation}. Their result (see their Theorem 6.4) can be interpreted as showing that if the action set $\actionset$ is a discrete set (namely a packing over the unit sphere), there exists a $2^{\Omega(d)}$ lower bound on the iteration complexity to find a point with constant suboptimality. In contrast, our Theorem \ref{cor:bin_search_ip} shows a much smaller upper bound when the domain is the entire unit sphere. 

\paragraph{Upper bound for smooth and P\L{} functions.} Next, we study whether we can show guarantees for a large class of functions. We show a positive result for functions that are both $\beta$-smooth and $\alpha$-Polyak-\L{}ojasiewicz (abbreviated as P\L{}). These assumptions are standard in optimization.

\begin{definition}[$\beta$-smooth function {\cite[{Lemma 3.4}]{bubeck2015convex}}]
\label{defn:smoothness}
We say $f$ is $\beta$-smooth if it satisfies (\ref{eq:smoothness}).
\begin{align}
    \text{For all } \vx, \vy \in \R^d &:\quad \abs{f(\vx)-f(\vy) - \ip{\nabla f(\vy), \vx-\vy}} \le \frac{\beta}{2} \cdot \norm{\vx-\vy}_2^2\label{eq:smoothness}
\end{align}
\end{definition}
\begin{definition}[$\alpha$-P\L{} function]
\label{defn:pl}
We say $f$ is $\alpha$-P\L{} if it satisfies (\ref{eq:pl}).
\begin{align}
    \text{ For all } \vx \in \R^d \text{ and minimizers } \zstar &:\quad f(\vx) -f(\zstar) \le \frac{1}{2\alpha}\norm{\nabla f(\vx)}_2^2 \label{eq:pl}
\end{align}
\end{definition}
Our main result for this setting is Theorem \ref{cor:bin_search_pl}.

\begin{theorem}
\label{cor:bin_search_pl}
Let $\mathcal{X} = \R^d$, and suppose $f$ is $\beta$-smooth (Definition \ref{defn:smoothness}) and $\alpha$-P\L{} (Definition \ref{defn:pl}). Fix any $\eps > 0$, as well as a known point $\vx_1$ and a value $B$ satisfying $B \ge f(\vx_1) - f(\zstar)$. There exists an algorithm that, in the setting of Problem \ref{problem:local_rec_fixed_set}, with probability at least $1-\expv{-O(d)}$:
\begin{itemize}
    \item outputs a point $\vx$ satisfying $f(\vx) - f(\zstar) \le \eps$ within $O\inparen{\nfrac{\beta}{\alpha} \cdot d \cdot \logv{\nfrac{B}{\eps}}^2}$ iterations;
    \item incurs total cost $O\inparen{\nfrac{\beta}{\alpha} \cdot B \cdot d}$.
\end{itemize}
Each pair of guesses at time $t$ can be computed in $O(d)$ time.
\end{theorem}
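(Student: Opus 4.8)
The plan is to run a randomized descent procedure that anchors one of the two guesses at the current iterate and uses the other as a random probe, so that the monotone adversary's "helpfulness'' can only help. Maintain a current point $\vx_t$ (with $\vx_1$ the given starting point) together with a value $B_t$ that is always a valid upper bound on the suboptimality gap $\Delta_t := f(\vx_t) - f(\zstar)$; initialize $B_1 := B$. In round $t$ draw $\vu$ uniformly from the unit sphere, set $r_t := c_0\cdot\frac{\sqrt{\alpha B_t}}{\beta\sqrt{d}}$ for a small absolute constant $c_0$, query $\xt{1} := \vx_t$ and $\xt{2} := \vx_t + r_t\vu$, and set $\vx_{t+1} := \xstar_t$. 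Since one guess is $\vx_t$ itself, the feedback constraint forces $f(\xstar_t)\le\min\inbraces{f(\vx_t),f(\vx_t+r_t\vu)}\le f(\vx_t)$, so $\Delta_{t+1}\le\Delta_t$ no matter what the (out-of-list, adversarial) response is, and $B_t$ stays a valid bound. This anchoring is precisely what defeats the obstruction described in the introduction: with two exploratory probes the adversary could answer with an unrelated point and reveal nothing, whereas anchoring at $\vx_t$ makes every valid response at least as good as the probe.

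\textbf{Per-round progress.} The quantitative input is that $\beta$-smoothness together with the $\alpha$-P\L{} condition pins down the gradient norm: $\sqrt{2\alpha\Delta_t}\le\norm{\nabla f(\vx_t)}\le\sqrt{2\beta\Delta_t}$, where the lower bound is (\ref{eq:pl}) and the upper bound follows from (\ref{eq:smoothness}) applied to $\vx_t - \tfrac1\beta\nabla f(\vx_t)$ together with $f(\zstar)\le f(\vx_t-\tfrac1\beta\nabla f(\vx_t))$. For a uniformly random unit $\vu$, a standard anti-concentration estimate gives $\ip{\nabla f(\vx_t),\vu}\le -c_1\norm{\nabla f(\vx_t)}/\sqrt{d}$ with probability $\Omega(1)$ for an absolute constant $c_1$. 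On this "good'' event, $\beta$-smoothness yields
\begin{align*}
f(\vx_t+r_t\vu)\ \le\ f(\vx_t) - c_1 r_t\frac{\norm{\nabla f(\vx_t)}}{\sqrt{d}} + \frac{\beta}{2}r_t^2,
\end{align*}
and substituting $r_t = c_0\sqrt{\alpha B_t}/(\beta\sqrt d)$, using $\norm{\nabla f(\vx_t)}\ge\sqrt{2\alpha\Delta_t}$ and (within an epoch) $\Delta_t\ge B_t/2$, makes the right-hand side at most $f(\vx_t) - \Omega\inparen{\alpha B_t/(\beta d)}$ once $c_0$ is a small enough constant, since then the first-order term dominates the $\frac{\beta}{2}r_t^2$ error. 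Hence in a good round the adversary must return a point with $\Delta_{t+1}\le\Delta_t - \Omega\inparen{\alpha B_t/(\beta d)}$; in every other round, $\Delta_{t+1}\le\Delta_t$.

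\textbf{Epochs, iteration count, and the high-probability bound.} Run the algorithm in epochs; epoch $k$ keeps $B_t$ fixed at $B_k$ and lasts $N=\Theta\inparen{\beta d/\alpha}$ rounds. The good events across rounds are independent with probability $\Omega(1)$, so a Chernoff bound gives $\Omega(N)$ good rounds with probability $1-\expv{-\Omega(d)}$ (using $\beta d/\alpha\ge d$). Then either $\Delta_t$ already dropped to $\le B_k/2$ at some point of the epoch (and stays there, as $\Delta_t$ is nonincreasing), or $\Delta_t>B_k/2$ throughout — in which case every good round decreases $\Delta$ by $\Omega(\alpha B_k/(\beta d))$, so the accumulated decrease is $\Omega(N\cdot\alpha B_k/(\beta d))=\Omega(B_k)\ge B_k/2$ for a suitable hidden constant in $N$, contradicting $\Delta_t>B_k/2$. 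Either way $\Delta\le B_k/2$ by the end of the epoch, so we may set $B_{k+1}:=B_k/2$, still a valid bound. After $K=O\inparen{\logv{B/\eps}}$ epochs we have $B_K\le\eps$ and hence $\Delta\le\eps$; a union bound over the $K$ epochs keeps the total failure probability at $\expv{-\Omega(d)}$. This gives iteration complexity $N\cdot K = O\inparen{(\beta/\alpha)\,d\,\logv{B/\eps}}$, which in particular lies within the claimed $O\inparen{(\beta/\alpha)\,d\,\logv{B/\eps}^2}$; the extra logarithmic slack can be spent either on a coarse doubling search for the correct scale $r_t$ (if one does not assume exact knowledge of $\beta/\alpha$) or on boosting the per-epoch success probability so the union bound is valid even for arbitrarily small $\eps$.

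\textbf{Cost, the main obstacle, and an alternative route.} Both guesses in round $t$ lie within distance $r_t$ of $\vx_t$, so $\beta$-smoothness gives $\max\inbraces{f(\xt1),f(\xt2)} - f(\zstar)\ \le\ \Delta_t + r_t\norm{\nabla f(\vx_t)} + \tfrac{\beta}{2}r_t^2 = O(B_t)$, using $\Delta_t\le B_t$, $\norm{\nabla f(\vx_t)}\le\sqrt{2\beta B_t}$, and the value of $r_t$. Summing over the $N=\Theta(\beta d/\alpha)$ rounds of each epoch and over epochs with $B_k = B/2^k$ gives total cost $\Theta(\beta d/\alpha)\cdot\sum_{k\ge0}O(B/2^k) = O\inparen{(\beta/\alpha)\,B\,d}$, as claimed. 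The delicate point — the main obstacle — is obtaining per-round progress \emph{with constant probability} (not merely in expectation) while only knowing $\Delta_t$ up to the maintained factor-of-two bound $B_t$: if $r_t$ is too large relative to the true $\Delta_t$, the smoothness error $\tfrac{\beta}{2}r_t^2$ swamps the first-order gain. This is handled by taking $c_0$ a sufficiently small absolute constant together with the fact that $\Delta_t$ never increases, so $B_t$ can only over-estimate, never under-estimate, the gap — which keeps the first-order term dominant and the epoch analysis intact. (An alternative proof reduces to Theorem \ref{cor:bin_search_ip}: run the linear-case algorithm on the radius-$r_t$ sphere around $\vx_t$, on which $\beta$-smoothness makes $f$ within $O(\beta r_t^2)$ of its linearization $\ip{\nabla f(\vx_t),\cdot}$, so one call drives $f$ down by $\Omega(\alpha\Delta_t/\beta)$; there the obstacle is instead that the monotone adversary may answer off the sphere, requiring a case analysis — treat a far response as a new, not-much-worse center, and project a near response, absorbing the $O(\beta r_t^2)$ error into the subroutine's constant target accuracy. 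I expect the off-sphere handling to be the main difficulty in that route, which is why the anchored-probe approach above is cleaner.)
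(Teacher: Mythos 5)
Your approach is essentially the paper's: anchor one query at $\vx_t$, perturb by a random direction of radius $\propto \sqrt{\alpha B_t}/(\beta\sqrt d)$, use the spherical anti-concentration bound to show a constant-probability descent event, and exploit the P\L{} inequality both to lower-bound $\norm{\nabla f}$ (ensuring the quadratic smoothness error does not swamp the first-order term) and to argue that an over-large probe is only possible when the gap has already dropped out of the current range. The paper packages this via a ``progress-distribution'' abstraction (Theorem~\ref{thm:bin_search_general} plus Lemma~\ref{lemma:pl_prog_distro}), which you bypass by arguing directly; that is a presentational rather than substantive difference.

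There is one genuine gap in your main writeup, and it is exactly the subtlety the paper highlights in its overview. You run each epoch for a \emph{constant-in-$k$} number of rounds $N=\Theta(\beta d/\alpha)$, so each epoch fails with probability $\exp(-\Omega(d))>0$. Over the infinite horizon of Problem~\ref{problem:local_rec_fixed_set}, some epoch eventually fails with probability $1$, and once $B_k$ is no longer a valid upper bound on $\Delta_t$ the probe radius $r_t$ shrinks much faster than the true gap, the per-round progress becomes far smaller than $\Omega(\alpha B_k/(\beta d))$, and the per-round cost stays at $\Omega(\Delta_t)$ for many rounds. Your summation $\Theta(\beta d/\alpha)\sum_k O(B/2^k)$ therefore does not hold with probability $1-\exp(-\Omega(d))$. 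You mention the fix -- boost the per-epoch success probability, i.e.\ run phase $k$ for $\Theta(k\cdot\beta d/\alpha)$ rounds -- but you frame it as a way to get the union bound over $K=O(\log(B/\eps))$ epochs for small $\eps$. It is actually necessary for the infinite-rounds cost bound: with $T(k)\propto k$, phase $k$ fails with probability $\exp(-\Omega(kd))$ and the series $\sum_k\exp(-\Omega(kd))=\exp(-\Omega(d))$ converges, which is what the paper does (their $T(i)$ in Algorithm~\ref{alg:main_alg} has an explicit factor of $i$, and the cost telescopes as $\sum_k k\,2^{-k}=O(1)$, giving the extra $\log^2$ in the iteration bound). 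So your proof is correct once you promote the ``boosting'' remark from a footnote to the actual epoch schedule and rerun the cost and iteration sums with $N_k\propto k$; as written, the constant-$N$ version proves the iteration bound for fixed $\eps$ but not the total-cost guarantee.
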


We prove Theorem \ref{cor:bin_search_pl} in Section \ref{subsec:pl_proof}. 

As an application, we show a positive result when the loss function is the Euclidean distance, and the decision space $\mathcal{X} = \cB_2^d$ is a unit ball: 

\begin{theorem}
\label{cor:bin_search_dist}
Let $\mathcal{X} = \cB_2^d$, let $\zstar$ be such that $\norm{\zstar}_2 \le 1$, and let $\myfunc{f}{\mathcal{X}}{\R}$ be $f(\vx) = \norm{\vx-\zstar}_2$. Fix any $\eps > 0$. There exists an algorithm that, in the setting of Problem \ref{problem:local_rec_fixed_set}, with probability at least $1 - \expv{-O(d)}$:
\begin{itemize}
    \item outputs a point $\vx$ satisfying $\norm{\vx-\zstar}_2 \le \eps$ within $O\inparen{d \cdot \logv{\nfrac{B}{\eps}}^2}$ iterations;
    \item incurs total cost $O\inparen{d}$.
\end{itemize}
Each pair of guesses at time $t$ can be computed in $O(d)$ time.
\end{theorem}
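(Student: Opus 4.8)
The plan is to reduce Theorem~\ref{cor:bin_search_dist} to Theorem~\ref{cor:bin_search_pl} by squaring the objective. The function $f(\vx)=\norm{\vx-\zstar}_2$ is itself neither smooth (it is not differentiable at $\zstar$) nor globally P\L{} (indeed $\norm{\nabla f(\vx)}_2=1$ for every $\vx\neq\zstar$, so (\ref{eq:pl}) fails once $\norm{\vx-\zstar}_2$ is large), so Theorem~\ref{cor:bin_search_pl} does not apply to it directly. However, $g(\vx)\coloneqq f(\vx)^2=\norm{\vx-\zstar}_2^2$ is $2$-smooth and $2$-P\L{} (both are one-line checks: $g(\vx)-g(\vy)-\ip{\nabla g(\vy),\vx-\vy}=\norm{\vx-\vy}_2^2$, and $g(\vx)-g(\zstar)=\tfrac14\norm{\nabla g(\vx)}_2^2$), so $\beta/\alpha=1$ for $g$. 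The observation that makes the reduction legitimate is that \emph{the monotone-adversary feedback for $f$ and for $g$ coincide}: since $t\mapsto t^2$ is increasing on $[0,\infty)$ and $f\ge 0$, the condition $f(\xstar_t)\le\min\inbraces{f(\xt1),f(\xt2)}$ is equivalent to $g(\xstar_t)\le\min\inbraces{g(\xt1),g(\xt2)}$. Hence every valid response in the $f$-instance is a valid response in the $g$-instance, and I may run the algorithm of Theorem~\ref{cor:bin_search_pl} on $g$.

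Two bookkeeping points need care. First, Theorem~\ref{cor:bin_search_pl} is stated for $\mathcal X=\R^d$ whereas here $\mathcal X=\cB_2^d$. I will run the $\R^d$-algorithm from $\vx_1=\zero\in\cB_2^d$ with budget $B=1\ge\norm{\zstar}_2^2=g(\vx_1)-g(\zstar)$, and whenever it proposes a point $\vy\notin\cB_2^d$ I will instead query $\Pi_{\cB_2^d}(\vy)$. Since $\zstar\in\cB_2^d$ and $\cB_2^d$ is convex, $\norm{\Pi_{\cB_2^d}(\vy)-\zstar}_2\le\norm{\vy-\zstar}_2$, so $f$ (hence $g$) only decreases under the projection; consequently (i) the adversary's reply to the projected query is still a valid reply to the point the algorithm believes it queried, so the interaction is a faithful run of the $\R^d$-algorithm against some adversary, and (ii) the cost actually incurred each round is no larger than the cost the analysis of Theorem~\ref{cor:bin_search_pl} charges. (Alternatively, one checks directly that, started from $\zero$, the algorithm never leaves $\cB_2^d$.) Second, to output a point with $\norm{\vx-\zstar}_2\le\eps$ it suffices to run the $g$-algorithm to accuracy $\eps^2$; by Theorem~\ref{cor:bin_search_pl} this succeeds with probability $1-\expv{-O(d)}$ within $O\inparen{\tfrac{\beta}{\alpha}\,d\,\logv{B/\eps^2}^2}=O\inparen{d\,\logv{1/\eps}^2}$ iterations, each query computable in $O(d)$ time.

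It remains to bound the total $f$-cost. The $f$-cost of a round equals the square root of its $g$-cost, and Theorem~\ref{cor:bin_search_pl} only guarantees that the \emph{total} $g$-cost is $O(\tfrac\beta\alpha Bd)=O(d)$; combining these facts through Cauchy--Schwarz over the $O(d\,\logv{1/\eps}^2)$ rounds would only yield an $f$-cost bound of $O(d\,\logv{1/\eps})$. To remove the logarithmic factor I will re-examine the cost ledger of the Theorem~\ref{cor:bin_search_pl} algorithm phase by phase. Its analysis organizes the run into phases $k=0,1,2,\dots$, where in phase $k$ the algorithm drives $g(\cdot)-g(\zstar)$ below $2^{-k}B$, and it bounds the $g$-cost contributed by phase $k$ by $O\inparen{\tfrac\beta\alpha d}\cdot 2^{-k}B$, so that the geometric sum over $k$ yields the $O(\tfrac\beta\alpha Bd)$ total. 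Within phase $k$ every queried point has $g$-value $O(2^{-k}B)$ — i.e.\ lies within Euclidean distance $O(2^{-k/2}\sqrt B)$ of $\zstar$ — and hence has $f$-cost $O(2^{-k/2}\sqrt B)$; passing to square roots through the per-phase $g$-cost bound, the $f$-cost contributed by phase $k$ is $O\inparen{\tfrac\beta\alpha d}\cdot 2^{-k/2}\sqrt B$, and summing this still-geometric series over $k$ gives a total $f$-cost of $O\inparen{\tfrac\beta\alpha\sqrt B\,d}=O(d)$.

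The main obstacle is exactly this last step: one must look inside the algorithm behind Theorem~\ref{cor:bin_search_pl} and confirm that within the phase targeting suboptimality $\sim 2^{-k}B$, \emph{all} queried points (not merely the running iterate) stay within $g$-value $O(2^{-k}B)$, so that taking square roots preserves the geometric decay that makes the cost sum converge to $O(d)$ rather than $O(d\,\logv{1/\eps})$. Everything else — the smoothness and P\L{} constants for $g$, the equivalence of the two feedback models, the projection argument for the domain, and the substitutions $\eps\mapsto\eps^2$, $B\mapsto 1$, $\beta/\alpha\mapsto 1$ in Theorem~\ref{cor:bin_search_pl} — is routine.
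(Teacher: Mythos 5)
Your reduction is sound and arrives at the right bounds, but it routes the argument differently from the paper, and the difference is worth pointing out. The paper never runs Theorem~\ref{cor:bin_search_pl} as a black box on $g=f^2$. Instead, it takes the square root one level earlier: Lemma~\ref{lemma:dist_prog_distro} invokes Lemma~\ref{lemma:pl_prog_distro} on $g$ to get a $(1,\gamma,\rho)$-progress distribution for $g$, then observes that a multiplicative contraction $\frac{g(\vx)-g(\xplus)}{g(\vx)-g(\zstar)}\ge\frac{2\rho}{d}$ forces $\frac{\norm{\xplus-\zstar}_2}{\norm{\vx-\zstar}_2}\le\sqrt{1-\tfrac{2\rho}{d}}\le 1-\tfrac{\rho}{d}$, i.e.\ it produces a $(1,\gamma,\rho)$-progress distribution for $f$ itself (with the projection onto $\cB_2^d$ absorbed into the distribution via non-expansiveness). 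Theorem~\ref{thm:bin_search_general} is then applied to $f$ directly, so the cost ledger comes out in $f$-units with no extra work; there is no separate $g$-to-$f$ cost conversion. Your route postpones the square root to the very end, which is exactly why you hit what you correctly flag as ``the main obstacle'': Theorem~\ref{cor:bin_search_pl} as stated only gives a total $g$-cost bound, and extracting an $O(d)$ total $f$-cost needs the per-phase structure of the proof. Your sketch of that step is essentially right (all queried points in phase $k$ have $\norm{\cdot-\zstar}_2=O(2^{-k/2}\sqrt{B})$ because the iterate does and the perturbation radius $\eps_k$ is even smaller), but note one small bookkeeping slip: phase $k$ runs for $T(k)=\Theta(k\cdot d)$ rounds, not $\Theta(d)$, so the per-phase $f$-cost is $O(k\,d\,2^{-k/2}\sqrt{B})$, not $O(d\,2^{-k/2}\sqrt{B})$; the extra factor of $k$ is harmless since $\sum_k k\,2^{-k/2}$ still converges, so your $O(d)$ conclusion stands. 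In short: both proofs hinge on the same observation that squaring interacts well with the progress-distribution framework, but the paper packages the square root inside the progress-distribution lemma so that the general machinery directly charges $f$-cost, avoiding the need to re-open Theorem~\ref{cor:bin_search_pl}'s proof.
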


We prove Theorem \ref{cor:bin_search_dist} in Section \ref{subsec:dist_proof}.

Note that unlike in Theorem \ref{cor:bin_search_pl}, Theorem \ref{cor:bin_search_dist} applies to the setting where the algorithm must guess points belonging to a given constraint set $\actionset$. Hence, in the proof of Theorem \ref{cor:bin_search_dist}, we have to be careful to ensure that the convergence argument still holds when we apply the algorithm for Theorem \ref{cor:bin_search_pl} along with a projection step. It is not clear that this argument holds by default for all $f$ satisfying the conditions requested by Theorem \ref{cor:bin_search_pl}. Furthermore, as will become evident, we really only require that $\actionset$ be any convex body (though we state the result with $\actionset = \cB_2^d$ to emphasize the consistency with our following lower bounds).

\paragraph{Lower bounds.} We also prove that the dependence on $d$ in our results is tight. In particular, when $f$ is either a linear function or the distance to the target (as in Theorem~\ref{cor:bin_search_dist}), then $\Omega(d)$ queries are necessary to identify $\zstar$. This will translate to a $\Omega(d)$ cost over an infinite number of rounds. In fact, our lower bound is valid when the adversary must return one of the two queried points, as in vanilla dueling optimization framework. 

Our lower bound also covers a more general setting than that stated in Problem \ref{problem:local_rec_fixed_set}. Thus far, we have only discussed the setting where the algorithm can query only two points and is told the better of the two. In many practical instances, the algorithm can query $m$ points and learn the point with the best objective value (we call this $m$-ary dueling optimization). Then, one may ask why we study only the $m=2$ in this paper. In our construction, we prove that unless $m$ is polynomial in $d$, we cannot decrease the total cost substantially below $\Omega(d)$. Thus, the most interesting case for constant $m$ is when $m = 2$, and in this case, our result is optimal in $d$.

See Theorem \ref{thm:large_m_lower} for a formal statement of our lower bound.

\begin{theorem}[Lower bound, $\ell_2$ distance]
\label{thm:large_m_lower}
Let $\cX = \cB_2^d$. For any randomized algorithm for $m$-ary dueling optimization, there exists a choice of minimizer $\zstar \in \cB_2^d$ and function $f(\vx) \coloneqq \norm{\vx-\zstar}_2$ such that the algorithm must:
\begin{itemize}
    \item perform $\Omega\inparen{\nfrac{d}{\log m}}$ iterations in expectation to find a point $\vx$ for which $f(\vx) - f(\zstar) \le \eps$.
    \item incur cost  $\Omega\inparen{\nfrac{d}{\log m}}$ in expectation.
\end{itemize}
Here, $\eps > 0$ is an absolute numerical constant.
\end{theorem}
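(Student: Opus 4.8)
The plan is to prove both lower bounds simultaneously via Yao's principle, using a single hard distribution over the hidden minimizer $\zstar$ together with one very simple adversary strategy. Throughout, the adversary plays \emph{honestly}: given the true $\zstar$ and the learner's round‑$t$ queries $\xt 1,\dots,\xt m$, it returns the queried point nearest to $\zstar$ (ties broken by index). This is a legal response, since the returned point $\xstar_t$ is one of the $\xt i$ and satisfies $f(\xstar_t)=\min_i\norm{\xt i-\zstar}_2=\min_i f(\xt i)$, and — crucially — it always returns one of the queried points, so the bound will hold even in the vanilla ``in‑list'' dueling framework, hence a fortiori in the setting of Problem~\ref{problem:local_rec_fixed_set}.

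\textbf{The hard distribution.} Take $\eps$ to be a sufficiently small absolute constant and let $S_0\subseteq\S_2^d$ be a maximal set of points that are pairwise at Euclidean distance more than $2\eps$; a standard covering/packing estimate on the sphere gives $|S_0|\ge 2^{cd}$ for a constant $c=c(\eps)>0$. Draw $\zstar\sim\mathsf{Unif}(S_0)$, so $\norm{\zstar}_2=1$ and $f(\zstar)=0$. The single geometric fact I use repeatedly is that, because $S_0$ is a $2\eps$‑packing, \emph{every point $p\in\cB_2^d$ lies within distance $\eps$ of at most one element of $S_0$}. Thus, to produce a point $\vx$ with $f(\vx)-f(\zstar)=\norm{\vx-\zstar}_2\le\eps$, the learner must in effect pin down which of the $2^{cd}$ candidates $\zstar$ is.

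\textbf{The counting step.} Fix a deterministic learner and a horizon $T$. Against the honest adversary the transcript after $T$ rounds is the sequence of returned \emph{indices} in $[m]^T$ — the returned point carries no information beyond its index, since the learner already knows its own queries — so there are at most $m^T$ possible transcripts. Because the learner is deterministic, both its queries over rounds $1,\dots,T$ and its candidate output depend only on the transcript, so across all transcripts it touches at most $O(m^{T}\cdot mT)$ distinct points that could certify success, and by the packing fact each such point is within $\eps$ of at most one element of $S_0$. Hence at most $m^{T+1}T$ elements $z\in S_0$ have the property that, when $\zstar=z$, the learner finds (queries or outputs) an $\eps$‑optimal point within $T$ rounds, so $\prob{\text{success within }T\text{ rounds}}\le m^{T+1}T/2^{cd}$ with the probability over $\zstar\sim\mathsf{Unif}(S_0)$. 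Choosing $T=\Theta(d/\log m)$ with a small enough constant — and noting the theorem is vacuous once $m\ge 2^{\Omega(d)}$, i.e.\ once $d/\log m=O(1)$ — makes this at most $\nfrac12$ for $d$ large.

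\textbf{Finishing, and the main obstacle.} By Yao's principle the same $\nfrac12$ bound passes to randomized learners, so there is a fixed $\zstar\in S_0$ for which, with probability at least $\nfrac12$ over the learner's randomness, it has neither queried nor output an $\eps$‑optimal point within $T=\Omega(d/\log m)$ rounds; this yields $\E[\#\text{iterations}]=\Omega(d/\log m)$. For the cost, observe that on that same high‑probability event, in each of the first $T$ rounds \emph{every} one of the $m$ queries is at distance more than $\eps$ from $\zstar$, so the round cost $\max_i\norm{\xt i-\zstar}_2-f(\zstar)$ exceeds $\eps$; summing over the first $T$ rounds gives total cost $>\eps T=\Omega(d/\log m)$, and taking $\zstar$ to be the packing element maximizing the probability of this event produces a single minimizer witnessing both claims. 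The main obstacle is getting the counting step exactly right: one must check that each round genuinely leaks only $\log_2 m$ bits (this is where ``the adversary returns one of the queried points'' is used) while simultaneously the $2\eps$‑separation of $S_0$ is fine enough to make ``being $\eps$‑close'' injective yet coarse enough that $|S_0|$ stays exponentially large — precisely the tension that forces $\eps$ to be a small absolute constant. The remaining pieces (Yao's principle, deducing the cost bound from the iteration bound, and the quantifier bookkeeping to get one $\zstar$ for both parts) are routine.
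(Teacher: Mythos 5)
Your proposal is correct, and it follows the same high-level template as the paper (Yao's principle, an exponential packing, an honest ``nearest queried point'' adversary, and a counting argument over the $\le m^T$ transcripts of a deterministic learner), but the key combinatorial lemma is genuinely different. The paper's central observation is about \emph{cost}: for each fixed transcript/state $Q$, at most one target $\zstar\in S$ can have cumulative cost below $0.05r$ over the first $r$ rounds, because if two well-separated targets both had small per-round max-distances in the same round, the triangle inequality would be violated; the cost lower bound is then immediate, and the iteration lower bound is deduced afterward by a black-box reduction (an algorithm with low expected iteration complexity could be padded to a low-cost algorithm, contradiction). Your proof instead makes the central observation about \emph{identification}: you bound the number of distinct points a deterministic learner could possibly query or output across all transcripts by $O(m^{T+1}T)$, and combine this with the $2\eps$-packing injectivity to conclude that only $O(m^{T+1}T)$ of the $2^{cd}$ targets could be $\eps$-identified within $T$ rounds; both the iteration and cost bounds then fall out directly from the resulting high-probability failure event (on failure, every query in every round is $>\eps$ from $\zstar$, so the per-round cost exceeds $\eps$). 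Your ``touched points'' accounting is a little more bookkeeping-heavy than the paper's one-target-per-state argument (and you have to be explicit that the output, not only the queries, is transcript-determined), but it is the more standard information-theoretic route, and it has the small advantage of yielding both bounds from the same event rather than deriving one from the other by reduction. Both are correct; neither uses anything beyond the in-list adversary, so both give the lower bound a fortiori for the monotone setting.
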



We prove Theorem \ref{thm:large_m_lower} in Section \ref{sec:lower-bound}. Using the same construction, we can also demonstrate that Theorem \ref{cor:bin_search_ip} is tight when $\mathcal{X}$ is the unit sphere.

\begin{corollary}[Lower bound, linear $f$]\label{corr:inner-prod-lower-bound}
Let $\mathcal{X} = \mathbb{S}_2^d$. For any randomized algorithm for $m$-ary dueling optimization there exists a choice of minimizer $\zstar \in \mathbb{S}_2^d$ and function $f(\vx) \coloneqq -\ip{\vx,\zstar}$ such that the same conclusions as in Theorem \ref{thm:large_m_lower} hold.
\end{corollary}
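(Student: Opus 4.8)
The plan is to transfer the lower bound of Theorem~\ref{thm:large_m_lower} essentially verbatim, using the exact correspondence between the linear objective on the sphere and the squared Euclidean distance. The key identity is that for any $\vx,\zstar\in\mathbb{S}_2^d$,
\[
f(\vx)-f(\zstar)\;=\;1-\ip{\vx,\zstar}\;=\;\tfrac12\,\norm{\vx-\zstar}_2^2 .
\]
Three consequences follow immediately. First, $f$ has the same minimizer $\zstar$ as $\vx\mapsto\norm{\vx-\zstar}_2$ on the sphere, and the $f$-suboptimality of a sphere point equals half its squared distance to $\zstar$. Second, for any tuple of queried points in $\mathbb{S}_2^d$ the point maximizing $f$ is the one farthest from $\zstar$, so the $m$-ary preference feedback (the index of the best point) is identical under the two objectives. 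Third, a point $\xstar_t\in\mathbb{S}_2^d$ satisfies $f(\xstar_t)\le\min_i f(\xti)$ if and only if it satisfies $\norm{\xstar_t-\zstar}_2\le\min_i\norm{\xti-\zstar}_2$, so a valid monotone-adversary response for one instance is valid for the other.

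With these in hand, I would invoke the construction used to prove Theorem~\ref{thm:large_m_lower}: inspecting that proof, the hard family can be taken with $\zstar\in\mathbb{S}_2^d$, and the adversary there returns one of the learner's queried points, hence always replies inside $\mathbb{S}_2^d$ as long as the learner queries points of $\mathbb{S}_2^d$. Now let $\cA$ be any randomized algorithm for $m$-ary dueling optimization of $f(\vx)=-\ip{\vx,\zstar}$ over $\actionset=\mathbb{S}_2^d$; all of $\cA$'s queries lie on the sphere, so we may run $\cA$ against that same adversary. By the equivalence above, the feedback $\cA$ receives coincides with what it would receive on the distance instance $\norm{\vx-\zstar}_2$, so $\cA$ behaves exactly like an algorithm for $m$-ary dueling optimization of the distance over $\cB_2^d$ that happens to query only sphere points. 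Restricting the learner's action set can only hurt the learner, so the $\Omega\inparen{\nfrac{d}{\log m}}$ iteration lower bound of Theorem~\ref{thm:large_m_lower} applies to $\cA$: since $f(\vx)-f(\zstar)\le\eps$ is equivalent to $\norm{\vx-\zstar}_2\le\sqrt{2\eps}$, choosing $\eps$ to be a suitable absolute constant reproduces the target accuracy of Theorem~\ref{thm:large_m_lower}, and $\Omega\inparen{\nfrac{d}{\log m}}$ rounds are required to reach it. The cost lower bound then follows as in Theorem~\ref{thm:large_m_lower}: if within $T=o\inparen{\nfrac{d}{\log m}}$ rounds $\cA$ has not produced an $\eps$-optimal point, then in each such round every queried point is $\eps$-suboptimal for $f$, so that round contributes more than $\eps$ to the cost, giving total cost $\Omega\inparen{\nfrac{d}{\log m}}$ in expectation.

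The only step requiring care — and the one I would check first — is confirming that the hard instance behind Theorem~\ref{thm:large_m_lower} can indeed be realized with $\zstar$ on the unit sphere and with an adversary whose replies stay on the sphere, so that the displayed identity applies throughout the interaction. If the construction instead places $\zstar$ strictly inside $\cB_2^d$, one rescales it to $\mathbb{S}_2^d$ (the packing and anticoncentration estimates underlying that proof are scale-invariant) and re-derives the identity with the corresponding constants. Everything else is the bookkeeping in the identity above together with the observation that shrinking the learner's action set cannot make its task easier.
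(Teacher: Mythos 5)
Your proposal is correct, and it takes a genuinely different route from the paper. The paper's proof of the corollary \emph{re-runs} the direct argument from Theorem~\ref{thm:large_m_lower}: it invokes Yao's lemma, takes the same well-separated set $S$, observes that separation $\norm{\vx-\vy}_2\ge 0.1$ forces $\ip{\vx,\vy}\le 0.995$, and then repeats the state-counting and two-target-separation argument with the inner-product cost in place of the distance cost, obtaining the cost lower bound first and deriving the iteration lower bound from it. You instead give a black-box \emph{reduction}: using the identity $1-\ip{\vx,\zstar}=\tfrac12\norm{\vx-\zstar}_2^2$ on $\mathbb{S}_2^d$, you argue that the $m$-ary feedback (and hence any algorithm's trajectory against the "return the best queried point" adversary) is literally identical under the two objectives, so the iteration lower bound transfers after an $\eps\mapsto\sqrt{2\eps}$ change of threshold; you then recover the cost lower bound from the iteration bound via the observation that until an $\eps$-optimal point is queried, every round costs more than $\eps$. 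Your approach is cleaner and more reusable (a single lemma relating the two objectives rather than reproving the packing argument), at the price of reversing the paper's iteration$\leftrightarrow$cost direction of implication, which you handle correctly. The one place needing care is exactly the one you flag: the hard set $S$ in Fact~\ref{fact:eps_separated_set} is stated for $\cB_2^d$, and both your argument and the paper's implicitly require $S\subset\mathbb{S}_2^d$ so that $\zstar$ and all responses stay on the sphere and the identity holds throughout; this is easily fixed since a $2^{\Omega(d)}$-size, $0.1$-separated packing of the unit sphere also exists, as you note.
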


\subsection{Technical overview}\label{sec:tech-overview}

At a high level, our algorithms maintain a guess $\vx_t$ for the optimal solution $\zstar$. They will update this guess over many interactions with the adversary.

\paragraph{A general recipe.} We first describe the primitives that our methods depend on. Our first technical innovation is the notion of \textit{progress distributions}. Loosely speaking, these are distributions from which a learner is likely to sample a new guess $\vx_{t+1}$ that decreases its suboptimality. See Definition \ref{defn:prog}.

\begin{definition}[Progress Distribution]
\label{defn:prog}
Let $\myfunc{f}{\actionset}{\R}$ for $\actionset \subseteq \R^d$. For $\vx\in \actionset$ and $1 \le p < 2$, we say a distribution $\cD(\vx)$ over vectors in $\R^d$ is a \textit{$(p, \gamma,\rho)$-progress distribution for $\vx$} if we have the below.
\begin{align*}
    \prvv{\xplus \sim \cD(\vx)}{\frac{f(\vx)-f(\xplus)}{\inparen{f(\vx)-f(\zstar)}^p} \ge \frac{\rho}{d}} \ge \gamma.
\end{align*}
\end{definition}

So, if for every $\vx_t$ the learner had sample access to some progress distribution $\cD(\vx_t)$, the learner can significantly improve its solution (e.g. when $p = 1$, roughly $\sim d/\rho$ steps are sufficient for the learner to decrease its suboptimality by a constant factor). It is therefore natural that repeating such a sample-then-guess approach ad infinitum will yield an approximately optimal solution. In Theorem \ref{thm:bin_search_general}, we prove this whenever there exist families of progress distributions for every range of possible suboptimalities. Thus, assuming the learner can maintain a (possibly quite pessimistic) estimate of its suboptimality over all the rounds, we obtain a template for proving the iteration complexities of Theorems \ref{cor:bin_search_ip}, \ref{cor:bin_search_pl}, and \ref{cor:bin_search_dist}. Note that $\rho$ can be an arbitrarily small positive constant; even if there is a slim chance of decreasing the suboptimality, this is still sufficient because the monotone adversary ensures that the algorithm can never make negative progress.

\paragraph{Specifying progress distributions.} We now discuss how we instantiate the above template for the $\beta$-smooth (Definition \ref{defn:smoothness}) and $\alpha$-P\L{} (Definition \ref{defn:pl}) case (Theorem \ref{cor:bin_search_pl}). We focus on Theorem \ref{cor:bin_search_pl} for the sake of brevity; the proofs of Theorems \ref{cor:bin_search_ip} and \ref{cor:bin_search_dist} require some additional care but at a high level follow a similar structure. At step $t$, the algorithm maintains a guess $\vx_t$ for the target $\zstar$. It chooses some step size $\eps_t$ and a random vector $\vg_t$ from $\eps_t \cdot \mathsf{Unif}(\S_2^{d-1})$, where $\S_2^{d-1} \coloneqq \inbraces{\vx \in \R^d \suchthat \norm{\vx}_2 = 1}$. We then query $\vx_t$ and $\vx_t - \vg_t$. The key observation is that with a constant probability, the angle between $\vg_t$ and the gradient $\nabla f(\vx)$ is small. We will use this to show that the distribution $\vx_t - \eps_t \cdot \mathsf{Unif}(\S_2^{d-1})$ is a $(1, C_1, C_2)$-progress distribution (Definition \ref{defn:prog}) for constants $C_1, C_2$. Intuitively, this means that $\vx_t - \vg_t$ almost behaves like a step of gradient descent. To turn this observation into an algorithm, we need two main insights.

\paragraph{Step size schedule.} The principal difficulty of this approach is to choose the step size $\eps_t$. It is not immediately obvious how to do so since the algorithm does not observe any actual gradients or function values. Hence, if our step sizes are too large, the algorithm may overshoot the optimal solution $\zstar$ and therefore not actually improve the quality of its current solution $\vx_t$. On the other hand, if our step sizes are too small, the algorithm may not make enough progress in each step, which undesirably increases both the iteration complexity and the total cost. 

To address this, we carefully construct a step size schedule that relies on a pessimistic upper bound on the suboptimality of the algorithm's current solution. With this schedule, we show that in every step, one of two things happens -- either the step size $\eps_t$ is small enough such that there is the possibility of the algorithm decreasing the cost, or it is too large. For the first case, we use $\beta$-smoothness (Definition \ref{defn:smoothness}) to prove that there is a constant probability that the algorithm finds a descent direction, which decreases the cost of its current solution substantially. For the second case, we use the $\alpha$-P\L{} condition (Definition \ref{defn:pl}) to prove that the cost the algorithm incurs in such steps is low. After enough steps, we can show that either the second case always holds (i.e. that the suboptimality is already desirably small) or the maximum cost that the algorithm can pay per round is small. We then decrease the step size $\eps_t$ by a constant factor, update the suboptimality estimate accordingly, and infinitely recurse.

\paragraph{Bounding the failure probability over infinite rounds.} It now remains to show that the probability that the algorithm fails to make enough progress over \textit{infinitely many rounds} is small. This is where the distinction between the two goals of Problem \ref{problem:local_rec_fixed_set} becomes apparent. Specifically, even if we have a subroutine that, with high probability, outputs an $\eps$-approximate solution, this does not immediately convert to an algorithm that can achieve bounded cost over an infinite number of rounds -- note that the failure probabilities may accumulate in a divergent manner. Hence, we will require a more careful probabilistic analysis.

To overcome this challenge, we design the algorithm to run in phases $i = 1, 2, \dots$. In phase $i$, we use a step size $\eps_t$ proportional to $2^{-i/2}$ and run phase $i$ for $\sim id$ steps. Using the fact that the family of distributions we are using for sampling next steps are $(1, C_1, C_2)$-progress distributions, it will be enough to prove that $\sim d \cdot \nfrac{\beta}{\alpha}$ steps yield enough improving steps to decrease the suboptimality by a constant factor. We can therefore apply a Chernoff bound to conclude that the probability that the algorithm fails to make enough progress in phase $i$ is at most $\expv{-id \cdot \nfrac{\beta}{\alpha}}$. Finally, we apply a union bound that the total probability of failure by $\expv{-d \cdot \nfrac{\beta}{\alpha}} \le \expv{-d}$. 

To bound the total cost over all phases $i \in \N_{\ge 1}$, we note that the sum of the suboptimalities in each round is of the form $d\sum_{i \ge 1} i2^{-i} = O(d)$. The guarantee on the iteration complexity follows by noting that to achieve a suboptimality of $2^{-i}$, the algorithm runs $d\sum_{j \le i} i = O\inparen{i^2 \cdot d}$ iterations.

\subsection{Related works}

\paragraph{Dueling convex optimization.} As already mentioned, our formulation in Problem \ref{problem:local_rec_fixed_set} is an extension of dueling convex optimization in the noiseless setting~\cite{jamieson2012query,skm21,saha2022dueling}. \citet{jamieson2012query} employ a coordinate-descent algorithm to show for $\alpha$-smooth and $\beta$-strongly convex $f$,  $\tilde{O}(\nfrac{d\beta}{\alpha} \logv{\nfrac{1}{\eps}})$ queries suffice to learn an $\eps$-optimal point. As mentioned earlier, it is not clear how to adapt their algorithm to handle monotone feedback. In addition, the papers \cite{skm21,saha2022dueling} show results for more general classes of $f$ and in the presence of noise (where the adversary can return invalid response with nonzero probability); see Section \ref{subsec:intro_future_work} for a discussion on extending our results to noisy settings. However, their algorithms explicitly rely on sign feedback $f(\xt{1}) \overset{?}{>} f(\xt{2})$ to construct gradient estimators, which are not possible in the monotone adversary setting.


\paragraph{Monotone adversaries.} Our setting is an example of learning with a monotone adversary, where an adversary can choose to improve the feedback or information the algorithm gets. A common characteristic is that the improved information may paradoxically break or harm the performance of a given algorithm that works with non-improved information. Monotone adversaries are often studied in the semi-random model literature \cite{blum1995coloring, feige_2021, moitra_2021} for statistical estimation problems \cite{cheng2018non,moitra_2021,kelner2022semi} as well as learning problems, i.e., linear classification with Massart noise \cite{massart2006risk, diakonikolas2019distribution}. 

\paragraph{Preference-based feedback.} Our formulation in this paper falls within the growing body of literature that tackles learning with preference-based feedback, where the algorithm does not learn \emph{how good} its options were in an absolute sense, just which one(s) were better than others. 
Other natural problems with preference-based feedback are contextual search \cite{ls18, lobel2018multidimensional, lls20}, contextual recommendation (also called contextual inverse optimization) \cite{bfl21, gollapudi2021contextual}, and $1$-bit matrix completion \cite{davenport20141}.

\subsection{Future work}
\label{subsec:intro_future_work}

We now discuss two interesting directions our work leaves open.

\paragraph{Dueling optimization with a monotone adversary under noise models.} The most pressing next step is to determine noise models under which we can either obtain algorithmic results or hardness for solving Problem \ref{problem:local_rec_fixed_set}. There are several natural noise models that one could study. As a first step, one can consider the most analogous extension of the noise model studied by \citet{jamieson2012query, skm21}. In the simplest form, they study a noise model where with probability $\nfrac{1}{2} + \nu$ for some parameter $0 < \nu \le \nfrac{1}{2}$, the adversary returns $\argmin_{i \in \inbraces{1,2}} f(\xt{i})$, and with probability $\nfrac{1}{2}-\nu$, the adversary returns $\argmax_{i \in \inbraces{1,2}} f(\xt{i})$. Note that this is a straightforward noise model to handle if the adversary must return one of $\xt{1}$ or $\xt{2}$. The algorithm simply queries the same pair of points roughly $\nu^{-2}$ times, which by Hoeffding's inequality is enough to determine the index $i$ that corresponded to the better of the two guesses. Additionally, this strategy can be implemented even without knowledge of the noise parameter $\nu$; see Section 6 of \cite{skm21} for more details.

The natural extension of this noise model to our monotone feedback setting is as follows. With probability $\nfrac{1}{2}+\nu$, the adversary returns a response $\xstar_t$ that satisfies $f(\xstar_t) \le \min\{f(\xt{1}), f(\xt{2})\}$. On the other hand, with probability $\nfrac{1}{2}-\nu$, the adversary returns an arbitrary point in $\actionset$. It is immediate that there is no analogue of the ``majority vote'' strategy in this setting. 
We therefore believe that entirely new algorithmic ideas will be needed to address this noise model. On the other hand, it would be very interesting to show an impossibility result for this noise model. An impossibility result would imply that the monotone feedback makes the problem provably harder than the vanilla dueling optimization setting.

Another noise model of interest is one where the adversary's valid monotone feedback is perturbed by a random variable $\triangle$ where $\exv{\triangle} = 0$ and $\mathsf{Cov}\insquare{\triangle} \preceq \sigma^2 \mI_d$. To our knowledge, this noise model has not been considered in past works dealing with optimization with preference-based feedback, even in the vanilla dueling setting (particularly the works of \citet{jamieson2012query} and \citet{skm21}).

\paragraph{Dueling optimization for other function classes.} An orthogonal thread would be to identify other function classes and feasible regions $\actionset$ for which we can build algorithms for Problem \ref{problem:local_rec_fixed_set}. For instance, \citet{skm21} obtain results for functions that are just $\beta$-smooth (and not necessarily $\alpha$-P\L{} or $\alpha$-strongly convex). Our Theorem \ref{cor:bin_search_pl} uses the $\alpha$-P\L{} condition to prove that steps that do not make much progress also do not incur much cost. Hence, to remove the $\alpha$-P\L{} assumption, one would need to either avoid this line of reasoning or find another way to argue that the costs in low-progress rounds are not too high.

\section{Proofs of upper bound results}

In this section, we prove Theorem~\ref{thm:bin_search_general} (in which we construct and analyze a meta-algorithm for Problem \ref{problem:local_rec_fixed_set} when the algorithm can sample next steps from progress distributions (Definition \ref{defn:prog})). We then show how to use this framework to prove  Theorem~\ref{cor:bin_search_ip} (results for $f(\vx) = \ip{-\vx, \zstar}$), Theorem~\ref{cor:bin_search_pl} (results for $f(\vx)$ being $\beta$-smooth and $\alpha$-P\L{}), and Theorem~\ref{cor:bin_search_dist} (results for $f(\vx) = \norm{\vx - \zstar}_2$), in that order. It will be helpful to recall the overview from Section \ref{sec:tech-overview} throughout this section.

We prove Theorem~\ref{thm:bin_search_general} in Section \ref{subsec:general_proof}, Theorem~\ref{cor:bin_search_ip} in Section \ref{subsec:ip_proof}, Theorem~\ref{cor:bin_search_pl} in Section \ref{subsec:pl_proof}, and Theorem~\ref{cor:bin_search_dist} in Section \ref{subsec:dist_proof}.

Before we jump into the main proofs, we will need a couple straightforward numerical inequalities.

\begin{lemma}
\label{lemma:cost_convergence_series}
For $r \in (0, 1)$ and $1 \le p < 2$, we have $\sum_{i \ge 0} i \cdot r^{(1-p/2)i} \le \frac{r^{p/2+1}}{(r-r^{p/2})^2}$.
\end{lemma}
\begin{proof}[Proof of Lemma \ref{lemma:cost_convergence_series}]
Recall that
\begin{align*}
    \sum_{i \ge 0} r^{(1-p/2)i} = \frac{1}{1-r^{1-p/2}}.
\end{align*}
Taking the derivative of both sides with respect to $r$ yields
\begin{align*}
    \sum_{i \ge 0} \inparen{1-\nfrac{p}{2}}i\cdot r^{(1-p/2)i - 1} = \frac{(2-p)r^{p/2}}{2(r-r^{p/2})^2}.
\end{align*}
We multiply both sides by $r$ and divide both sides by $1-\nfrac{p}{2}$; we conclude that
\begin{align*}
    \sum_{i \ge 0} i \cdot r^{(1-p/2)i} = \frac{r^{p/2+1}}{(r-r^{p/2})^2}
\end{align*}
which recovers the statement of Lemma \ref{lemma:cost_convergence_series}.
\end{proof}

\begin{lemma}[Inner product with a random vector]\label{lem:random-progress}
Let $\vg \sim \mathsf{Unif}(\S_2^{d-1})$ and let $\vy \in \S_2^{d-1}$ be fixed. Then
\begin{align*}
    \prvv{\vg}{\ip{\vg,\vy}  \ge \frac{1}{2\sqrt{d}}} \ge \frac{1}{8}.
\end{align*}
\end{lemma}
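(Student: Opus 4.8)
The plan is to lower-bound the probability that a uniformly random unit vector $\vg$ has inner product at least $\tfrac{1}{2\sqrt d}$ with a fixed unit vector $\vy$. By rotational symmetry of the uniform distribution on $\S_2^{d-1}$, we may assume without loss of generality that $\vy = \ve_1$, so that $\ip{\vg,\vy} = \vg_1$ is simply the first coordinate of $\vg$. Thus the task reduces to showing $\prvv{\vg}{\vg_1 \ge \tfrac{1}{2\sqrt d}} \ge \tfrac18$.

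First I would control the typical size of $\vg_1$ in the $L^2$ sense: since $\vg$ is uniform on the sphere, $\E[\vg_1^2] = \tfrac1d \sum_{i=1}^d \E[\vg_i^2] = \tfrac1d$ by symmetry and the constraint $\norm{\vg}_2 = 1$. By symmetry of $\vg_1$ about $0$, it then suffices to argue that $\vg_1^2$ is not too concentrated near $0$, i.e. that $\prv{\vg_1^2 \ge \tfrac{1}{4d}} \ge \tfrac14$; combined with the sign symmetry this gives $\prv{\vg_1 \ge \tfrac{1}{2\sqrt d}} \ge \tfrac18$. To get the anti-concentration bound, I would use the Paley–Zygmund inequality (or a direct truncated-second-moment argument): $\prv{\vg_1^2 \ge \theta\, \E[\vg_1^2]} \ge (1-\theta)^2 \tfrac{(\E[\vg_1^2])^2}{\E[\vg_1^4]}$. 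With $\theta = \tfrac14$ this needs an upper bound on $\E[\vg_1^4]$. A clean way to get it: realize $\vg = \vz/\norm{\vz}_2$ for $\vz \sim \gaussian{0}{\mI_d}$, so $\E[\vg_1^4] \le$ (something like) $\tfrac{3}{d(d+2)} \le \tfrac{3}{d^2}$ using the known moments of the Dirichlet/Beta distribution that $\vg_1^2$ follows, namely $\vg_1^2 \sim \mathrm{Beta}(\tfrac12, \tfrac{d-1}{2})$. Then Paley–Zygmund gives $\prv{\vg_1^2 \ge \tfrac{1}{4d}} \ge (3/4)^2 \cdot \tfrac{(1/d)^2}{3/d^2} = \tfrac{9}{16}\cdot\tfrac13 = \tfrac{3}{16} \ge \tfrac18$, which even comfortably beats the needed $\tfrac14$; I would just verify the constants carefully so the final bound lands at $\tfrac18$ or better. (Alternatively one could cite a standard small-ball estimate for marginals of the uniform sphere measure and skip the computation.)

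The main obstacle is getting the constants to work out cleanly for \emph{all} $d$, including small $d$ — the Beta-moment formula is exact so this is really just bookkeeping, but one has to be a little careful that the fourth-moment bound $\E[\vg_1^4] \le 3/d^2$ and the threshold $\theta = 1/4$ interact to leave margin above $1/8$ uniformly in $d \ge 1$. A secondary subtlety is handling the symmetrization step rigorously: since $\vg_1$ has a symmetric distribution, $\prv{\vg_1 \ge t} = \tfrac12 \prv{|\vg_1| \ge t}$ for any $t > 0$ (the atom at $0$ has measure zero for $d \ge 2$, and $d=1$ is trivial), so $\prv{\vg_1 \ge \tfrac{1}{2\sqrt d}} = \tfrac12\prv{\vg_1^2 \ge \tfrac{1}{4d}}$, and I just need the latter probability to be at least $\tfrac14$. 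Everything else is routine.
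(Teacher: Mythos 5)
Your high-level reduction matches the paper's: rotational invariance to take $\vy = \ve_1$, sign symmetry to halve, and then an anti-concentration bound for $\vg_1^2$ at threshold $\tfrac{1}{4d}$. But the Paley--Zygmund step does not actually deliver what you need, and the arithmetic in your writeup hides this. You correctly state that you must show $\prv{\vg_1^2 \ge \tfrac{1}{4d}} \ge \tfrac14$, and you then compute the Paley--Zygmund lower bound to be $\tfrac{3}{16}$ --- but $\tfrac{3}{16} = 0.1875 < 0.25 = \tfrac14$, so this is \emph{not} sufficient; the sentence ``which even comfortably beats the needed $\tfrac14$'' is simply false. After halving you would only get $\tfrac{3}{32} < \tfrac18$.

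This is not a constant you can tweak away. Using the exact Beta moment $\E[\vg_1^4] = \tfrac{3}{d(d+2)}$ rather than the looser $\tfrac{3}{d^2}$, Paley--Zygmund with $\theta = \tfrac14$ gives $\prv{\vg_1^2 \ge \tfrac{1}{4d}} \ge \tfrac{3(d+2)}{16d}$, which is $\ge \tfrac14$ only for $d \le 6$ and converges to $\tfrac{3}{16}$ as $d \to \infty$. The issue is intrinsic: in the $d\to\infty$ limit $\sqrt{d}\,\vg_1$ is standard Gaussian, and for a standard Gaussian $Z$ Paley--Zygmund gives $\prv{Z^2 \ge \tfrac14} \ge \tfrac{9}{16}\cdot\tfrac13 = \tfrac{3}{16}$, whereas the true value is about $0.617$. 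Fourth-moment Paley--Zygmund is just too lossy here. Changing $\theta$ does not help either, since the threshold $\theta\,\E[\vg_1^2]$ is pinned to $\tfrac{1}{4d}$ by the statement you are trying to prove.

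The paper gets around this by invoking a sharper estimate: Lemma 2.2(a) of Dasgupta--Gupta (the Johnson--Lindenstrauss projection lemma), which uses the exact chi-squared tail for $\vg_1^2$ and yields $\prv{\vg_1^2 \le \tfrac{1}{4d}} \le \exp\!\left(\tfrac12\left(1 - \tfrac14 + \ln\tfrac14\right)\right) < \tfrac34$, hence $\prv{\vg_1^2 \ge \tfrac{1}{4d}} \ge \tfrac14$ uniformly in $d$. If you want a self-contained proof, you would need to bound the Beta$(\tfrac12,\tfrac{d-1}{2})$ CDF directly (or the chi-squared lower tail), not apply a generic second/fourth-moment inequality; your closing remark about citing ``a standard small-ball estimate'' is in fact the route the paper takes, and is the route you should take too.
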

\begin{proof}
By rotational invariance, without loss of generality, we can let $\vy = \ve_1$. We apply Lemma 2.2 (a) due to \citet{dasgupta} with $\beta = \nfrac{1}{4}$ to conclude that
\begin{align*}
    \prvv{\vg}{\vg_1^2 \le \frac{1}{4d}} \le \expv{\frac{1}{2}\inparen{1-\frac{1}{4}+\ln\inparen{\frac{1}{4}}}} < \frac{3}{4}
\end{align*}
which means that $ \prvv{\vg}{\abs{\ip{\vg,\vy}}  \ge \frac{1}{2\sqrt{d}}} \ge \frac{1}{4}$. The result of Lemma \ref{lem:random-progress} follows by symmetry.
\end{proof}

\subsection{A general algorithm for Problem \ref{problem:local_rec_fixed_set} with progress distributions}
\label{subsec:general_proof}

The goal of this subsection is to develop the general tools we need to prove our main results.

The key primitive of our analysis is a general algorithm (Algorithm \ref{alg:main_alg}) that solves Problem \ref{problem:local_rec_fixed_set} when we are given certain convenient distributions from which we sample new guesses. We call these \textit{progress distributions}; recall Definition \ref{defn:prog}.

Let us describe Algorithm \ref{alg:main_alg}. In each step, Algorithm \ref{alg:main_alg} maintains a current guess $\vx_t$ and chooses a slight perturbation of that guess $\xplus_t \sim \cD(\vx_t)$, where $\cD(\vx_t)$ is a $(p, \gamma, \rho)$-progress distribution (Definition \ref{defn:prog}). Algorithm \ref{alg:main_alg} then submits the pair of guesses $\inbraces{\vx_t, \xplus_t}$. To analyze Algorithm \ref{alg:main_alg}, the main observation is that with probability $\ge \gamma$, the point $\xplus_t$ substantially improves over the cost of $\vx_t$ -- this follows directly from Definition \ref{defn:prog}. We exploit this intuition to give our most general result (Theorem \ref{thm:bin_search_general}) and to prove the correctness of Algorithm \ref{alg:main_alg}.

\begin{theorem}
\label{thm:bin_search_general}
Let $\myfunc{f}{\actionset}{\R}$. Let $B$ and $\vx_1 \in \actionset$ be such that $f(\vx_1)-f(\zstar) \le B$. For $C > 0$, constant $r \in (0,0.99)$, and for all $i \in \N_{\ge 1}$, suppose there exists intervals of the form $C \cdot \insquare{r^{i+1}, r^{i}}$ such that their union covers the interval $[0, B]$.

If there exists a $(p, \gamma,\rho)$-progress distribution $\cD_i(\vx)$ whenever $f(\vx)-f(\zstar) \in C \cdot \insquare{r^{i+1}, r^{i}}$ for all $i \ge 1$ and where $p, \gamma, \rho$ do not depend on $\vx$ and $i$, then there is an algorithm (Algorithm \ref{alg:main_alg}) for Problem \ref{problem:local_rec_fixed_set} that, with probability at least $1 - \expv{-O\inparen{\frac{d}{\rho B^{p-1}}}}$, incurs total cost
\begin{align*}
    O\inparen{\frac{B\logv{\nfrac{1}{r}}}{B^{p-1}\gamma\rho\min\inbraces{r^{p\inparen{\nfrac{p-1}{2-p}}}, \inparen{r-r^{p/2}}^2}} \cdot d}.
\end{align*}
Additionally, Algorithm \ref{alg:main_alg} finds a point $\vx$ satisfying $f(\vx)-f(\zstar) \le \eps$ in 
\begin{align*}
    O\inparen{\frac{1}{B^{p-1}\gamma\rho} \cdot d \cdot \logv{\frac{B}{\eps}}^2}
\end{align*}
iterations with at least the aforementioned probability.
\end{theorem}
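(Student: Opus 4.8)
The plan is to analyze Algorithm \ref{alg:main_alg}, which proceeds in phases $i = 1, 2, \dots$; in phase $i$ the algorithm holds a pessimistic suboptimality estimate of order $C r^i$, repeatedly queries $\{\vx_t, \xplus_t\}$ with $\xplus_t \sim \cD_i(\vx_t)$, and uses the monotone feedback $\xstar_t$ (which satisfies $f(\xstar_t) \le \min\{f(\vx_t), f(\xplus_t)\}$) as the next iterate $\vx_{t+1}$. The key point is that since the adversary is monotone, suboptimality is nonincreasing along the run, so it suffices to count how many \emph{successful} steps — those on which the event in Definition \ref{defn:prog} fires — are needed to drive the estimate down by a factor of $r$. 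First I would fix a phase $i$ and, assuming inductively that at the start of the phase $f(\vx_t) - f(\zstar) \le C r^i =: \delta_i$, show that a successful step starting from any $\vx$ with $f(\vx) - f(\zstar) \in C[r^{i+1}, r^i]$ decreases suboptimality to at most $(f(\vx) - f(\zstar)) - \frac{\rho}{d}(f(\vx)-f(\zstar))^p$. Since $p \ge 1$ and the suboptimality is at least $C r^{i+1} = r \delta_i$ throughout the phase until we reach the target, each successful step multiplies the ``gap-to-next-threshold'' progress by a factor controlled by $\frac{\rho}{d}(r\delta_i)^{p-1} \ge \frac{\rho}{d} (r^i C)^{p-1} r^{p-1}$ (when $p>1$ the gap shrinks, when $p=1$ it is additive), so a short calculation gives that
\[
N_i \;=\; O\!\inparen{\frac{d \,\log(1/r)}{\rho\, (C r^{i})^{p-1}\, r^{p(p-1)/(2-p)}}}
\]
successful steps suffice to certify $f(\vx) - f(\zstar) \le \delta_{i+1} = C r^{i+1}$. (The exponent on $r$ here is exactly the bookkeeping needed to make the geometric sums in the cost bound converge; I would pin it down by the same manipulation as in Lemma \ref{lemma:cost_convergence_series}.)

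Next I would handle the probabilistic part. Run phase $i$ for $T_i := \lceil 2 N_i / \gamma \rceil$ rounds. Each round independently succeeds with probability at least $\gamma$ regardless of history (Definition \ref{defn:prog} is a per-step guarantee conditioned on the current $\vx_t$), so by a Chernoff bound the probability that fewer than $N_i$ of the $T_i$ rounds succeed is at most $\expv{-\Omega(\gamma T_i)} = \expv{-\Omega(N_i)} = \expv{-\Omega\inparen{d / (\rho (C r^i)^{p-1})}}$. Because $p < 2$, the exponents $d/(\rho (Cr^i)^{p-1})$ are increasing in $i$ when $p>1$ and constant in $i$ when $p=1$; in either case they are all $\ge \Omega(d/(\rho B^{p-1}))$ (using $C r^i \le C r \le B$ up to constants), so a union bound over all phases $i \ge 1$ — the tail is dominated by a geometric series — gives overall failure probability $\expv{-\Omega(d/(\rho B^{p-1}))}$, as claimed. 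On the complementary good event, the inductive hypothesis propagates: phase $i$ ends with $f(\vx) - f(\zstar) \le C r^{i+1}$, which feeds phase $i+1$.

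Finally I would sum the costs. On the good event, every round in phase $i$ has both queries with suboptimality at most $\delta_i = C r^i$ (the old iterate is $\le C r^i$ by the start-of-phase invariant together with monotonicity, and $\xplus_t$ only gets accepted as $\vx_{t+1}$ after monotone improvement — but for the \emph{cost} we only need $\max\{f(\vx_t), f(\xplus_t)\} - f(\zstar)$, and here I would use that $\xplus_t \sim \cD_i(\vx_t)$ is a ``slight perturbation'' whose cost is also $O(C r^i)$; this is where I would invoke whatever norm/boundedness structure the instantiations supply, or more cleanly, bound the phase-$i$ per-round cost by $O(\delta_i)$ directly since the progress distributions in all our applications are supported near $\vx_t$). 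Hence the total cost is $\sum_{i \ge 1} T_i \cdot O(C r^i) = O\inparen{\sum_i \frac{d \log(1/r)}{\gamma \rho (C r^i)^{p-1} r^{p(p-1)/(2-p)}} \cdot C r^i} = O\inparen{\frac{C \log(1/r)}{\gamma \rho\, r^{p(p-1)/(2-p)}} d \sum_i (C r^i)^{2-p}}$, and since $2 - p > 0$ this geometric sum is $O((C r)^{2-p}/(1 - r^{2-p}))$; collecting terms and writing $C = \Theta(B)$ recovers the stated bound $O\inparen{\frac{B \log(1/r)}{B^{p-1}\gamma\rho \min\{r^{p(p-1)/(2-p)}, (r - r^{p/2})^2\}} d}$, with the $(r-r^{p/2})^{-2}$ factor coming from Lemma \ref{lemma:cost_convergence_series} in exactly the $p=1$ regime (or whichever regime makes that term dominate). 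For the iteration complexity to reach a fixed $\eps$: we need the smallest $i$ with $C r^i \le \eps$, i.e. $i = O(\log(B/\eps))$, and the total number of rounds through phase $i$ is $\sum_{j \le i} T_j = \sum_{j \le i} O\inparen{\frac{d}{\gamma \rho (Cr^j)^{p-1}}}$; since each $T_j = O(d/(\gamma \rho B^{p-1}))$ up to the $r$-dependent constant and there are $O(\log(B/\eps))$ phases each of length $O(\cdot \log(1/r))$ — wait, more precisely $T_j$ itself carries a $\log(1/r)$ and the number of phases is $O(\log(B/\eps))$, so one power of $\log$ comes from summing the phase lengths and we must also note $N_j$ already contains the per-phase factor — the total is $O\inparen{\frac{1}{B^{p-1}\gamma\rho} d \log(B/\eps)^2}$, where one logarithmic factor is the number of phases and the other is absorbed from the within-phase analysis (constant $r$ makes $\log(1/r) = \Theta(1)$, but the squared dependence is genuine because $N_i$ grows and the phase index appears; I would track this carefully). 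The main obstacle I anticipate is precisely this last bit of bookkeeping — getting the exponent $p(p-1)/(2-p)$ on $r$ correct and confirming that the per-round cost in phase $i$ really is $O(C r^i)$ rather than something larger — so I would be careful to state the progress distributions used in the applications are supported in a ball of radius $O(\text{current suboptimality})$ around $\vx_t$, which makes the cost bound immediate and is satisfied in all of Theorems \ref{cor:bin_search_ip}, \ref{cor:bin_search_pl}, \ref{cor:bin_search_dist}.
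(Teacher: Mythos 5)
Your proposal follows the same overall architecture as the paper's proof: phase-based schedule with a pessimistic suboptimality estimate $Cr^i$, Chernoff bound to count successful steps per phase, union bound over phases, and a geometric cost sum. The two subtleties you flag at the end (bounding $f(\vx_t^+)$ for the cost, and the ``genuine'' $\log^2$ dependence) are both real, and the first one is in fact an implicit assumption the paper also relies on in its Lemma on total cost — the paper's assumption there only controls $f(\vx_t) - f(\zstar)$, not $\max\{f(\vx_t),f(\vx_t^+)\} - f(\zstar)$, so either the progress distributions must be supported near $\vx_t$ (true in all applications) or an extra hypothesis belongs in Definition \ref{defn:prog}.

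The genuine gap in your argument is the union bound over infinitely many phases. Your phase length $T_i = \lceil 2N_i/\gamma\rceil$ with $N_i = O\!\inparen{\tfrac{d\log(1/r)}{\rho (Cr^i)^{p-1} r^{p(p-1)/(2-p)}}}$ has no explicit dependence on the phase index $i$. When $p>1$ the failure probability $\exp(-\Omega(N_i))$ does decay super-exponentially in $i$ so the union bound is fine, but when $p=1$ (which is the relevant case for Theorems \ref{cor:bin_search_pl} and \ref{cor:bin_search_dist}) you get $N_i = O(d\log(1/r)/\rho)$, a constant independent of $i$. Then $\sum_{i\ge 1} \exp(-\Omega(N_i))$ is an infinite sum of identical terms and diverges — the claim that ``the tail is dominated by a geometric series'' is not justified. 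The paper's fix is to set $T(i) \propto i \cdot d/(\gamma\rho (Cr^{i+1})^{p-1}) \cdot \log(1/r)$, with an explicit factor of $i$. Then Lemma \ref{lemma:steps_per_phase} gives a per-phase failure probability $\le r^{\Omega(i \cdot d B^{-(p-1)}/\rho)}$, which decays geometrically in $i$, so the union bound yields $\sum_{i\ge 1} r^{\Omega(i d/\rho B^{p-1})} = \exp(-\Omega(d/(\rho B^{p-1})))$. This same factor of $i$ is also exactly what produces the second logarithm in the iteration bound: $\sum_{j\le i} T(j) \propto d\sum_{j\le i} j \propto d\, i^2$, and setting $i = \Theta(\log(B/\eps))$ gives $\log(B/\eps)^2$. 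You observed that the squared log ``is genuine because $N_i$ grows'' but your own $N_i$ formula does not grow with $i$; inserting the $i$ factor to save the union bound is what makes this consistent. Finally, the $\min\{r^{p(p-1)/(2-p)}, (r-r^{p/2})^2\}$ factor in the cost bound comes from splitting the phases at a threshold $i_p$ depending on $p$ (where the exponent on $r$ in $T(i)\cdot Cr^i$ turns from possibly nonpositive to strictly positive) and handling the two ranges separately via Lemma \ref{lemma:cost_convergence_series}; your sketch collapses this into a single geometric sum, which does not account for the low-$i$ regime when $p$ is close to $2$.
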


\begin{algorithm}
\caption{General recipe algorithm for dueling convex optimization}\label{alg:main_alg}
\begin{algorithmic}[1]
    \STATE \textbf{Input}: Interaction with a monotone adversary $\cM$ as defined in Problem \ref{problem:local_rec_fixed_set}; initial point $\vx_1$ and bound $B$ satisfying $f(\vx_1)-f(\zstar) \le B$; values $C$ and $r$ for which there exist corresponding intervals and $(p,\gamma, \rho)$-progress distribution families $\cD_i$ (see the statement of Theorem \ref{thm:bin_search_general}).
    \STATE Initialize $\vx_1 = 0$, $t=1$.
    \FOR{$i = 1, \dots$}
        \FOR{$T(i) \coloneqq \nfrac{2i}{(\gamma\min(1,\rho))} \cdot (Cr^{i+1})^{-(p-1)} \cdot  \logv{\nfrac{1}{r}} \cdot d$ iterations}\label{line:main_alg_ti}
            \STATE Sample $\xplus_t$ from $\cD_i(\vx_t)$.
            \STATE \textbf{Submit guesses} $\inbraces{\vx_t, \xplus_t}$ and  \textbf{receive response} $\xstar_t$.
            \STATE Let $\vx_{t+1} = \xstar_t$.
            \STATE Update $t \gets t+1$.
        \ENDFOR
    \ENDFOR
\end{algorithmic}
\end{algorithm}

The proof of Theorem \ref{thm:bin_search_general} has two main parts. In the first part, we will prove that for each value of $i$ (call the set of timesteps belonging to a particular value of $i$ ``phase $i$''), the number of steps $T(i)$ is sufficient to ensure that the cost of the algorithm's solution decays gracefully with sufficiently large probability. In the second part, we will prove that the total cost the algorithm pays over all phases $i \ge 1$ is $\sim B \cdot \nfrac{d}{\gamma\rho}$ as promised. Theorem \ref{thm:bin_search_general} will easily follow by combining these facts.

We start with stating and proving Lemma \ref{lemma:steps_per_phase}.

\begin{lemma}
\label{lemma:steps_per_phase}
Let $i \ge 1$ and let $T(i)$ be defined below (or see Line \ref{line:main_alg_ti} of Algorithm \ref{alg:main_alg}).
\begin{align*}
    T(i) \coloneqq \frac{2i}{\gamma\min(1,\rho)\cdot (Cr^{i+1})^{p-1}}  \cdot  \logv{\nfrac{1}{r}} \cdot d.
\end{align*}
Let $t_i$ be the first iteration of phase $i$. If $f(\vx_{t_i}) - f(\zstar) \le Cr^i$, then, with probability $\ge 1 - r^{\frac{dB^{-(p-1)}}{4\rho} \cdot i}$, we have $f(\vx_{t_i+T(i)+1}) - f(\zstar) \le Cr^{i+1}$.
\end{lemma}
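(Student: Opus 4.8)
The plan is to track the suboptimality $s_t := f(\vx_t) - f(\zstar)$ and exploit the monotone adversary. Since $\vx_{t+1} = \xstar_t$ with $f(\xstar_t) \le \min\{f(\vx_t), f(\xplus_t)\}$, we get $s_{t+1} \le s_t$, so $s_t$ is non-increasing over the entire run; in particular $s_{t_i} \le s_1 \le B$. If $s_{t_i} \le Cr^{i+1}$ the conclusion is immediate by monotonicity, so I may assume $s_{t_i} > Cr^{i+1}$, and hence $Cr^{i+1} < B$. The useful structural fact is: as long as $s_t > Cr^{i+1}$ we have $s_t \in (Cr^{i+1}, Cr^i]$, so the distribution $\cD_i(\vx_t)$ used in phase $i$ is a genuine $(p,\gamma,\rho)$-progress distribution; and once $s_t \le Cr^{i+1}$ it stays there. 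Hence it suffices to show that, with probability at least $1 - r^{\frac{dB^{-(p-1)}}{4\rho}\cdot i}$, $s_t$ drops to at most $Cr^{i+1}$ within the $T(i)$ iterations of phase $i$.

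Call iteration $t$ \emph{successful} if $f(\vx_t) - f(\xplus_t) \ge \tfrac{\rho}{d}s_t^p$; by Definition~\ref{defn:prog}, while $s_t > Cr^{i+1}$ this happens with conditional probability at least $\gamma$ given the past. The key computation is that a successful iteration makes \emph{multiplicative} progress. Put $\lambda := \tfrac{\rho}{d}(Cr^{i+1})^{p-1}$. Since $p \ge 1$ and $s_t > Cr^{i+1}$ we have $s_t^p = s_t \cdot s_t^{p-1} \ge s_t (Cr^{i+1})^{p-1}$, so a successful iteration gives $s_{t+1} \le f(\xplus_t) - f(\zstar) = s_t - \big(f(\vx_t) - f(\xplus_t)\big) \le s_t - \tfrac{\rho}{d}s_t^p \le s_t(1-\lambda)$; moreover $\lambda \le 1$, because every point has value $\ge f(\zstar)$, so the (positive-probability) progress event forces $\tfrac{\rho}{d}s_t^p \le s_t$, i.e.\ $\lambda \le \tfrac{\rho}{d}s_t^{p-1} \le 1$. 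Therefore, after $k^\star := \big\lceil \tfrac{d}{\rho(Cr^{i+1})^{p-1}}\log(1/r) \big\rceil$ successful iterations (which necessarily all occur while $s_t > Cr^{i+1}$) we would reach $s \le Cr^i(1-\lambda)^{k^\star} \le Cr^i e^{-\lambda k^\star} \le Cr^i \cdot r = Cr^{i+1}$. So it is enough to show that phase $i$ contains at least $k^\star$ successful iterations with the claimed probability.

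To turn ``conditional success probability $\ge \gamma$'' into a tail bound, I would couple the run with i.i.d.\ $Z_1, \dots, Z_{T(i)} \sim \mathrm{Bernoulli}(\gamma)$, constructed from the per-iteration randomness so that at every iteration with $s_t > Cr^{i+1}$ the iteration is successful whenever $Z_t = 1$ (possible exactly because the conditional success probability is $\ge \gamma$ there). On the event $\{\sum_t Z_t \ge k^\star\}$ the phase must succeed: otherwise $s_t > Cr^{i+1}$ throughout phase $i$, making all $\ge k^\star$ of the $Z_t=1$ iterations successful, contradicting the previous paragraph. Hence the failure probability is at most $\Pr[\mathrm{Bin}(T(i),\gamma) < k^\star]$. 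The constant in $T(i)$ is calibrated so that $\mu := \gamma T(i) \ge \tfrac{2id}{\rho(Cr^{i+1})^{p-1}}\log(1/r) > 2i(k^\star - 1)$ (using $\min(1,\rho) \le \rho$ and $k^\star = \lceil\,\cdot\,\rceil$), so $k^\star - 1 < \mu/(2i)$, and a multiplicative Chernoff bound with deviation $\delta = 1 - \tfrac{1}{2i} \ge \tfrac12$ gives $\Pr[\mathrm{Bin}(T(i),\gamma) < k^\star] \le e^{-\delta^2\mu/2} \le e^{-\mu/8} \le \exp\big(-\tfrac{i}{4}\cdot\tfrac{d}{\rho(Cr^{i+1})^{p-1}}\log(1/r)\big)$. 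Finally, $Cr^{i+1} < B$ and $p \ge 1$ give $(Cr^{i+1})^{p-1} \le B^{p-1}$, so this is at most $\exp\big(-\tfrac{i}{4}\cdot\tfrac{d}{\rho B^{p-1}}\log(1/r)\big) = r^{\frac{dB^{-(p-1)}}{4\rho}\cdot i}$, as claimed.

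I expect the last two steps to be where the real care is needed: making the coupling fully rigorous (since $\cD_i(\vx_t)$, and hence the conditional success probability, depends on the realized trajectory, one must be careful about the exact filtration being conditioned on), handling the ceiling in $k^\star$ cleanly, and getting the constants (the $2$ in $T(i)$, the $\tfrac14$ in the exponent, the factor inside the Chernoff deviation) to line up. The $p > 1$ case also needs the small twist of turning the additive guarantee $\tfrac{\rho}{d}s_t^p$ into a multiplicative contraction via the lower bound $s_t > Cr^{i+1}$, together with the check $\lambda \le 1$ so that the recursion $s_{t+1} \le s_t(1-\lambda)$ is well-behaved.
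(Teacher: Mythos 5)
Your proof is correct and follows essentially the same route as the paper's: define per-step success indicators from the progress distribution, apply a multiplicative Chernoff bound to guarantee enough successes, and chain the resulting per-success multiplicative contractions (using $s_t > Cr^{i+1}$ to turn $s_t^p$ into $s_t(Cr^{i+1})^{p-1}$) to cross the $Cr^{i+1}$ threshold within the phase. You are somewhat more careful than the paper about two points it glosses over --- the coupling needed because the $Y_t$ are only conditionally (not i.i.d.) Bernoulli, and the fact that once $s_t \le Cr^{i+1}$ the progress-distribution guarantee lapses but monotonicity finishes the job --- but the underlying argument and the way the constants in $T(i)$ feed into the $r^{\frac{dB^{-(p-1)}}{4\rho}i}$ failure probability are the same.
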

\begin{proof}[Proof of Lemma \ref{lemma:steps_per_phase}]
Assume that we have $f(\vx_{t_i})-f(\zstar) \in C \cdot \insquare{r^{i+1}, r^i}$ (otherwise, we are done immediately).

Define the indicator random variable $Y_t$ as follows.
\begin{align*}
    Y_t \coloneqq \indicator{\frac{f(\vx_{t})-f(\xplus_t)}{\inparen{f(\vx_{t})-f(\zstar)}^p} \ge \frac{\rho}{d}}.
\end{align*}

Consider the distribution of guesses $\cD_i$ (let us omit the argument $\vx_t$ for the sake of brevity). Since $\cD_i$ is a $(p,\gamma,\rho)$-progress distribution, we have
\begin{align*}
    \prvv{\xplus \sim \cD_i}{\frac{f(\vx_t)-f(\xplus_t)}{f(\vx_t)-f(\zstar)} \ge \frac{\rho}{d} \cdot \inparen{Cr^{i+1}}^{p-1}} \ge \prvv{\xplus \sim \cD_i}{Y_t = 1}  \ge \gamma.
\end{align*}
Call every step $t$ for which $Y_t = 1$ a ``successful step.'' Let us give a high-probability count on the number of successful steps. Recall that a form of the Chernoff bound states that, for $\delta \in [0,1]$ and independent indicator random variables $Y_j$,
\begin{align*}
    \prv{\sum_{j = t_i}^{t_i+T(i)} Y_j \le (1-\delta)\exv{\sum_{j=t_i}^{t_i+T(i)} Y_j}} \le \expv{-\frac{\delta^2 \cdot \exv{\sum_{j=t_i}^{t_i+T(i)} Y_j}}{2}}.
\end{align*}
Applying the Chernoff bound with $\delta = \nfrac{1}{2}$ yields
\begin{align*}
    \prv{\sum_{j=t_i}^{t_i+T(i)} Y_j \le \frac{T(i)\gamma}{2}} \le \expv{-\frac{i \cdot \frac{d}{\rho\inparen{Cr^{i+1}}^{p-1}} \cdot \logv{\nfrac{1}{r}}}{4}} \le r^{\frac{dB^{-(p-1)}}{4\rho} \cdot i}
\end{align*}
where we use $Cr^{i+1} \le Cr^2 \le B$.

It remains to show that after at least $\nfrac{T(i)\gamma}{2}$ successful steps, we have $f(\vx_{t_i + T(i) + 1}) - f(\zstar) \le Cr^{i+1}$. Recall that we assume that $f(\vx_{t_i}) - f(\zstar) \ge Cr^{i+1}$ and note that for every successful step, we have
\begin{align*}
    \frac{f(\vx_t)-f(\xplus_t)}{f(\vx_t)-f(\zstar)} \ge \frac{\rho}{d} \cdot \inparen{Cr^{i+1}}^{p-1}
\end{align*}
which implies
\begin{align*}
    \frac{f(\xplus_t)-f(\zstar)}{f(\vx_t)-f(\zstar)} \le 1 - \frac{\rho}{d} \cdot \inparen{Cr^{i+1}}^{p-1}.
\end{align*}
We multiply over all steps in phase $i$, giving
\begin{align*}
    \frac{f(\vx_{t_i + T(i) + 1})-f(\zstar)}{f(\vx_{t_i})-f(\zstar)} &= \prod_{t = t_i}^{t_i + T(i)} \frac{f(\vx_{t+1}) - f(\zstar)}{f(\vx_t) - f(\zstar)} \le \inparen{1 - \frac{\rho}{d} \cdot \inparen{Cr^{i+1}}^{p-1}}^{T(i)\gamma/2} \\
    &\le \inparen{1 - \frac{2i}{\gamma} \cdot \frac{\logv{\nfrac{1}{r}}}{T(i)}}^{T(i)\gamma/2} \le \expv{\frac{2i}{\gamma} \cdot \frac{\logv{\nfrac{1}{r}}}{T(i)} \cdot \frac{T(i)\gamma}{2}} = r^{i} \le r.
\end{align*}
Finally, recall that $f(\vx_{t_i}) - f(\zstar) \le Cr^i$. Combining this with the above gives $f(\vx_{t_i + T(i) + 1}) - f(\zstar) \le Cr^{i+1}$, concluding the proof of Lemma \ref{lemma:steps_per_phase}.
\end{proof}

Next, we have Lemma \ref{lemma:total_cost}, which controls the total cost that Algorithm \ref{alg:main_alg} incurs assuming that the cost is sufficiently low in each phase.

\begin{lemma}
\label{lemma:total_cost}
For a timestep $t$, let $i(t)$ be the phase that $t$ belongs to.

If for all $t$ we have $f(\vx_t) - f(\zstar) \le Cr^{i(t)}$, then Algorithm \ref{alg:main_alg} incurs total cost
\begin{align*}
    O\inparen{\frac{B\logv{\nfrac{1}{r}}}{C^{p-1}\gamma\rho\min\inbraces{r^{p\inparen{\nfrac{p-1}{2-p}}}, \inparen{r-r^{p/2}}^2}} \cdot d}.
\end{align*}
\end{lemma}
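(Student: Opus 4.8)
The plan is to bound the per-round cost by $Cr^{i(t)}$ (which is immediate from the hypothesis, since the cost in round $t$ is $\max\{f(\xt{1}),f(\xt{2})\} - f(\zstar) = f(\vx_t) - f(\zstar)$ because $\xplus_t$ is only queried together with $\vx_t$ and the adversary is monotone, so in fact $f(\vx_t)$ is the larger of the two objective values), and then sum $Cr^{i(t)}$ over all rounds. Grouping the rounds by phase, the total cost is at most $\sum_{i \ge 1} T(i) \cdot C r^{i}$, so the whole computation reduces to estimating this series.

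First I would substitute the definition $T(i) = \frac{2i}{\gamma \min(1,\rho)} (Cr^{i+1})^{-(p-1)} \log(1/r)\, d$ from Line \ref{line:main_alg_ti}. This turns the $i$-th term into a constant (absorbing $r^{-(p-1)}$, $\gamma$, $\min(1,\rho)$, $\log(1/r)$, $C$) times $i \cdot C^{2-p} r^{i - (p-1)(i+1)} d = i \cdot C^{2-p} r^{(2-p)i} r^{-(p-1)} d$. Pulling out the $i$-independent factors, the sum becomes a constant multiple of $C^{2-p} r^{-(p-1)} d \cdot \frac{\log(1/r)}{\gamma\rho} \sum_{i \ge 1} i \, r^{(2-p)i}$. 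Wait — I should be careful: I want the series in the form $\sum_i i \cdot (r^{?})^i$ matching Lemma \ref{lemma:cost_convergence_series}, which handles $\sum_{i\ge 0} i \cdot s^{(1-p/2)i}$. So I would set $s = r^2$, so that $s^{(1-p/2)i} = r^{(2-p)i}$, and apply Lemma \ref{lemma:cost_convergence_series} with this $s$ and the given $p$, obtaining $\sum_{i \ge 0} i \cdot r^{(2-p)i} \le \frac{(r^2)^{p/2+1}}{(r^2 - (r^2)^{p/2})^2} = \frac{r^{p+2}}{(r^2 - r^p)^2}$.

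The remaining step is bookkeeping: combine the prefactors, and simplify $\frac{r^{p+2}}{(r^2-r^p)^2}$. Factoring $r^2$ out of the denominator gives $\frac{r^{p+2}}{r^4 (1 - r^{p-2})^2} = \frac{r^{p-2}}{(1-r^{p-2})^2}$; alternatively one can factor $r^p$ to write $(r^2 - r^p)^2 = r^{2p}(r^{2-p}-1)^2$. Either way, I would then identify the resulting expression with (a constant times) $\min\{r^{p(p-1)/(2-p)}, (r-r^{p/2})^2\}^{-1}$ as stated; this is the one place where I would need to check, by splitting into cases (say $p$ near $1$ versus $p$ near $2$, or $r$ bounded away from $0$ versus $r$ near $0$), that the clean-looking expression in the Lemma's statement is indeed an upper bound on the messier rational function that came out of Lemma \ref{lemma:cost_convergence_series} up to absolute constants. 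I also need to re-examine the power of $C$: the statement claims $C^{-(p-1)}$ in the denominator, matching the $C^{2-p}\cdot C^{-1} = C^{1-p}$ I get above (since one factor of $C$ in front of $r^i$ contributes $C$, and the $(Cr^{i+1})^{-(p-1)}$ contributes $C^{-(p-1)}$, netting $C^{1-p} = C^{-(p-1)}$) — good, and likewise the extra factor of $B$ in the statement comes from using $r^{-(p-1)} \le r^{-2(p-1)} \cdot(\text{const})$ and bounding by $B^{\,0}$... actually I would more simply note $Cr^{i+1} \le Cr^2 \le B$ as in Lemma \ref{lemma:steps_per_phase}, so one factor $(Cr^{i+1})^{-(p-1)}$ can be traded: keep $r^{(2-p)i}$ convergent and bound the leftover constant $r^{-(p-1)}$ crudely, with the global $B$ appearing because $\sum_i T(i) C r^i$ naturally scales like $B$ times the geometric tail. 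The main (only) obstacle is this final constant-chasing reconciliation of the two forms of the denominator; everything else is a direct substitution and one application of Lemma \ref{lemma:cost_convergence_series}.
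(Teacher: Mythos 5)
Your approach is correct and is in fact cleaner than the paper's. The paper proceeds from the same starting point (total cost $\le \sum_i T(i)\,Cr^i$) but then splits the sum at a threshold $i_p = 2\lceil(p-1)/(2-p)\rceil$: for $i \ge i_p$ it uses the inequality $i - (i+1)(p-1) \ge (1-p/2)i$ and applies Lemma~\ref{lemma:cost_convergence_series} with parameter $r$, while for $i < i_p$ it bounds the per-round cost trivially by $B$. The $r^{p(p-1)/(2-p)}$ piece of the $\min$ in the statement is an artifact of this split. You instead factor the constant $r^{-(p-1)}$ out exactly and apply Lemma~\ref{lemma:cost_convergence_series} with $s=r^2$, which handles all $i$ at once and avoids the threshold entirely. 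The ``constant-chasing'' step you flagged does go through: write $r^2 - r^p = (r-r^{p/2})(r+r^{p/2})$ so that $(r^2-r^p)^2 \ge r^2(r-r^{p/2})^2$, hence $r^3/(r^2-r^p)^2 \le r/(r-r^{p/2})^2$; combined with the prefactor $C^{2-p} = C\cdot C^{-(p-1)}$ and the relation $Cr\le B$ (which is implicit in the paper's proof too, via the step $Cr^{p/2+1}\le B$, and holds with equality in both instantiations), you recover the stated bound with only the $(r-r^{p/2})^2$ branch of the $\min$ — so your route actually yields a slightly sharper conclusion.

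One small correction of reasoning, though it does not affect the outcome: your justification that the per-round cost equals $f(\vx_t) - f(\zstar)$ — namely that ``$f(\vx_t)$ is the larger of the two objective values'' because the adversary is monotone — is not right. The monotone guarantee constrains the response $\xstar_t$, not the relative order of $f(\vx_t)$ and $f(\xplus_t)$; since $\xplus_t$ is a random perturbation, it can have $f(\xplus_t) > f(\vx_t)$. The correct statement (also used implicitly by the paper) is that for the specific progress distributions employed, $f(\xplus_t) - f(\zstar)$ is within a constant factor of $f(\vx_t) - f(\zstar)$, so the per-round cost $\max\{f(\vx_t), f(\xplus_t)\} - f(\zstar) = O(Cr^{i(t)})$ still holds and the asymptotic bound is unaffected.
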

\begin{proof}[Proof of Lemma \ref{lemma:total_cost}]
Recall throughout this proof that $r \le 0.99$ and $p$ is a constant such that $p < 2$.

Observe that in phase $i$, the algorithm incurs cost at most
\begin{align*}
    T(i) \cdot Cr^i = \frac{2i}{\gamma} \cdot \frac{dCr^i}{\rho\inparen{Cr^{i+1}}^{p-1}} \cdot \logv{\nfrac{1}{r}} = \frac{2i}{\gamma} \cdot \frac{d\logv{\nfrac{1}{r}}}{\rho C^{p-2}} \cdot r^{i - (i+1)(p-1)}.
\end{align*}
We will find a threshold $i_p$ for which for all $i \ge i_p$, the above cost is exponentially decaying. This will allow us to control the sum of the costs over infinitely many rounds. We choose $i_p = 2 \cdot \ceil{\inparen{\nfrac{(p-1)}{(2-p)}}}$. Notice that for all $i \ge i_p$, the exponent on $r$ can be bounded as
\begin{align*}
    i-(i+1)(p-1) = i(2-p) - (p-1) \ge \inparen{1-\frac{p}{2}}i.
\end{align*}
Note that this also implies that $(i_p+1)(p-1) \le \nfrac{p}{2} \cdot i_p = p\ceil{\nfrac{(p-1)}{(2-p)}}$.

To total the cost, we consider two cases. First, suppose $1 \le i \le i_p - 1$. Observe that in each of these phases, we pay cost at most $B$, so we have
\begin{align*}
    \sum_{i=1}^{i_p-1} B\cdot T(i) &\le B \inparen{T_{i_p} \cdot i_p} = 2B\inparen{2 \cdot \frac{p-1}{2-p}}^2 \cdot \frac{\logv{\nfrac{1}{r}}}{\gamma} \cdot \frac{d}{\rho C^{p-1}r^{(i_p+1)(p-1)}} \\
    &\le 2B\inparen{2 \cdot \frac{p-1}{2-p}}^2 \cdot \frac{\logv{\nfrac{1}{r}}}{\gamma} \cdot \frac{d}{\rho C^{p-1}r^{p\inparen{\nfrac{p-1}{2-p}}}}. \numberthis\label{eq:bound1}
\end{align*}
Next, we sum over all phases $i \ge i_p$. We obtain a cost that is at most
\begin{align*}
    \sum_{i \ge i_p} \frac{2i}{\gamma} \cdot \frac{d\logv{\nfrac{1}{r}}}{\rho C^{p-2}} \cdot r^{i - (i+1)(p-1)} \le \frac{2d\logv{\nfrac{1}{r}}}{\gamma\rho \cdot C^{p-2}}\sum_{i \ge i_p} i \cdot r^{(1-p/2)i} \le \frac{2d\logv{\nfrac{1}{r}}}{\gamma\rho \cdot C^{p-2}} \cdot \frac{r^{p/2+1}}{(r-r^{p/2})^2} \numberthis\label{eq:bound2}
\end{align*}
where the last inequality follows from Lemma \ref{lemma:cost_convergence_series}. Combining \eqref{eq:bound1} and \eqref{eq:bound2} yields
\begin{align*}
    &\quad 2B\inparen{2 \cdot \frac{p-1}{2-p}}^2 \cdot \frac{\logv{\nfrac{1}{r}}}{\gamma} \cdot \frac{d}{\rho C^{p-1}r^{p\inparen{\nfrac{p-1}{2-p}}}}+\frac{2d\logv{\nfrac{1}{r}}}{\gamma\rho \cdot C^{p-2}} \cdot \frac{r^{p/2+1}}{(r-r^{p/2})^2} \\
    &\le 2B\inparen{2 \cdot \frac{p-1}{2-p}}^2 \cdot \frac{\logv{\nfrac{1}{r}}}{\gamma} \cdot \frac{d}{\rho C^{p-1}r^{p\inparen{\nfrac{p-1}{2-p}}}}+\frac{2d\logv{\nfrac{1}{r}}}{\gamma\rho \cdot C^{p-1}} \cdot \frac{B}{(r-r^{p/2})^2} \\
    &= O\inparen{\frac{B\logv{\nfrac{1}{r}}}{C^{p-1}\gamma\rho\min\inbraces{r^{p\inparen{\nfrac{p-1}{2-p}}}, \inparen{r-r^{p/2}}^2}} \cdot d}
\end{align*}
This concludes the proof of Lemma \ref{lemma:total_cost}.
\end{proof}

We are now ready to prove Theorem \ref{thm:bin_search_general}.

\begin{proof}[Proof of Theorem \ref{thm:bin_search_general}]
It is sufficient to prove that with probability $\ge 1 - \expv{-O\inparen{\frac{d}{\rho B^{p-1}}}}$, at the end of phase $i$, we have $f(\vx_t) - f(\zstar) \le Cr^{i+1}$. Recall the conclusion of Lemma \ref{lemma:steps_per_phase} and that $f(\vx_1) - f(\zstar) \le B \le Cr$; by a union bound, we have for all phases $i$ that $f(\vx_t) - f(\zstar) \le Cr^{i+1}$ with probability
\begin{align*}
    1 - \sum_{i \ge 1} r^{\frac{dB^{-(p-1)}}{4\rho} \cdot i} \ge 1 - \expv{-O\inparen{\frac{d}{\rho B^{p-1}}}}
\end{align*}
where we use $0 < r < 0.99$. The first part of Theorem \ref{thm:bin_search_general} now follows directly from applying Lemma \ref{lemma:total_cost}. The rest of the statement of Theorem \ref{thm:bin_search_general} follows by noting that
\begin{align*}
    \sum_{j \le i} T(j) = \frac{2C^{-(p-1)}\logv{\nfrac{1}{r}}}{\gamma\min(1,\rho)} \cdot d \cdot \sum_{j \le i} j\inparen{r^{-(j+1)(p-1)}} \lesssim \frac{2C^{-(p-1)}}{\gamma\min(1,\rho)} \cdot d \cdot i^2.
\end{align*}
where we again use $r < 0.99$. We set $\eps = Cr^i$ and conclude.
\end{proof}

\subsection{Proof of Theorem \ref{cor:bin_search_ip}}
\label{subsec:ip_proof}

The goal of this subsection is to prove Theorem \ref{cor:bin_search_ip}.

Our plan will be to use the general guarantee of Theorem \ref{thm:bin_search_general}. Thus, the main task is to prove that there is an appropriate interval cover and corresponding sequence $\cD_i(\vx)$ of progress distributions for all $\vx$ belonging to phase $i$ that satisfy the conditions of Theorem \ref{thm:bin_search_general}.

We prove this fact in Lemma \ref{lem:inner-prod-dists}. We remark that we made no effort to optimize the numerical constants; we choose the constants that appear in the Lemma statement to simplify calculations, as these will not impact our asymptotic results.

\begin{lemma}
\label{lem:inner-prod-dists}
Let $\myfunc{f}{\R^d}{\R}$ be the negative inner product function defined on $\mathbb{S}_2^{d-1}$ with respect to some unknown target $\zstar$. Then for any $\vx$ for which $f(\vx)-f(\zstar) \in \insquare{10^{-(i+1)}, 10^{-i}}$ and for which $\langle \zstar,\vx\rangle>0$, there is a $(1.5,10^{-1},10^{-4})$-progress distribution (Definition \ref{defn:prog}) that can be computed in time $O(d)$.
\end{lemma}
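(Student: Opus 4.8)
The plan is to exhibit, for each $\vx$ on the sphere with $f(\vx)-f(\zstar) = \ip{\zstar,\vx^\star}-\ip{\vx,\zstar}=1-\ip{\vx,\zstar} \in [10^{-(i+1)},10^{-i}]$ and $\ip{\zstar,\vx}>0$, a distribution $\cD_i(\vx)$ from which a new guess $\xplus$ decreases the suboptimality by the amount required in Definition~\ref{defn:prog} with $p=1.5$, $\gamma = 10^{-1}$, $\rho = 10^{-4}$. The natural candidate, following the technical overview, is to take a random perturbation of $\vx$ of the appropriate scale and then renormalize it back to $\mathbb{S}_2^{d-1}$ (since the action set is the sphere). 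Concretely, I would set $\xplus = \frac{\vx - \eps_i \vg}{\norm{\vx-\eps_i\vg}_2}$ where $\vg\sim\Unif(\S_2^{d-1})$ and $\eps_i$ is a carefully chosen step size tied to the current suboptimality scale $10^{-i}$ --- roughly $\eps_i \asymp \sqrt{10^{-i}}$, so that the perturbation size matches the distance $\norm{\vx-\zstar}_2 \asymp \sqrt{1-\ip{\vx,\zstar}}$ from $\vx$ to the optimum.

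\textbf{Key steps.} First I would express the suboptimality purely in terms of the angle: $f(\vx)-f(\zstar) = 1 - \ip{\vx,\zstar}$, and similarly $\norm{\vx-\zstar}_2^2 = 2(1-\ip{\vx,\zstar})$, so the ``target direction'' from $\vx$ is $\vy \coloneqq \frac{\zstar - \ip{\vx,\zstar}\vx}{\norm{\zstar - \ip{\vx,\zstar}\vx}_2}$, the unit vector in the tangent plane at $\vx$ pointing toward $\zstar$. Second, I would invoke Lemma~\ref{lem:random-progress}: with probability at least $\nfrac18 \ge \gamma$... actually I need $\gamma = 10^{-1}$, so I should be slightly more careful --- with probability at least $\nfrac18$ we have $\ip{\vg,\vy}\ge \nfrac{1}{2\sqrt d}$, and I can additionally ask (via another constant-probability event, or by noting $\ip{\vg,\vx}$ is small with constant probability) that the component of $\vg$ along $\vx$ is not too large; intersecting these still leaves probability $\ge 10^{-1}$. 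Third, conditioned on this good event, I would compute $\ip{\xplus,\zstar}$ explicitly: moving from $\vx$ in direction $-\eps_i\vg$ increases the inner product with $\zstar$ by $\approx \eps_i \ip{\vg, \zstar} \ge \eps_i \cdot \norm{\zstar-\ip{\vx,\zstar}\vx}_2 \cdot \ip{\vg,\vy} \asymp \eps_i\sqrt{1-\ip{\vx,\zstar}}\cdot\frac1{\sqrt d}$, then I would account for the renormalization (which costs a term of order $\eps_i^2$) and for the second-order curvature of the sphere. Choosing $\eps_i \asymp \sqrt{10^{-i}} \asymp \sqrt{f(\vx)-f(\zstar)}$, the gain in suboptimality is $\asymp \frac{(10^{-i})}{\sqrt d}$, and dividing by $(f(\vx)-f(\zstar))^{1.5}\le 10^{-1.5 i}$... hmm, I want $\frac{f(\vx)-f(\xplus)}{(f(\vx)-f(\zstar))^{1.5}} \ge \frac{\rho}{d}$; with gain $\asymp \frac{(f(\vx)-f(\zstar))}{\sqrt d}$ this ratio is $\asymp \frac{1}{\sqrt{d}\cdot(f(\vx)-f(\zstar))^{1/2}} \ge \frac{1}{\sqrt d \cdot 10^{-i/2}}$, which is $\ge \frac{\rho}{d}$ comfortably (indeed it beats $\rho/d$ by a $\sqrt d$ factor) provided the constants are tracked. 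This is why $p=1.5$ rather than $p=1$ works here and the interval endpoints $10^{-i}$ are chosen to make the step size selection clean.

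\textbf{Verifying the cover and the runtime.} I would then check the remaining bookkeeping for Theorem~\ref{thm:bin_search_general}: the intervals $[10^{-(i+1)},10^{-i}]$ for $i\ge 1$ together with the range $[0,B]$ with $B$ a suitable constant (here $B \le 2$, since $f(\vx)-f(\zstar)\le 2$ on the sphere) are covered, so setting $C$ and $r = 10^{-1}$ appropriately fits the template; within each such interval the \emph{same} triple $(1.5,10^{-1},10^{-4})$ works, as required. Finally, $\xplus$ is computable in $O(d)$ time since sampling $\vg\sim\Unif(\S_2^{d-1})$ (e.g.\ via $d$ i.i.d.\ Gaussians and one normalization) and forming $\frac{\vx-\eps_i\vg}{\norm{\vx-\eps_i\vg}_2}$ are both $O(d)$ operations, and the step size $\eps_i$ depends only on the phase index $i$, which the algorithm tracks.

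\textbf{Main obstacle.} The delicate part is the explicit second-order accounting after renormalization: I must show that the first-order gain $\asymp \eps_i \ip{\vg,\zstar}$ genuinely dominates the renormalization correction and the curvature loss, \emph{uniformly} over all valid $\vx$ in the phase, and that the failure modes (where $\vg$ has a large component along $\vx$, or where $\ip{\vg,\vy}$ is negative) are absorbed into the $1-\gamma$ probability. Getting the constant $\rho = 10^{-4}$ to actually come out (as opposed to some unspecified small constant) requires being careful but not clever; the condition $\ip{\zstar,\vx}>0$ is what guarantees we are in the regime where $\norm{\zstar-\ip{\vx,\zstar}\vx}_2$ is bounded below by a constant and the linearization is valid, so the hypothesis is used exactly there.
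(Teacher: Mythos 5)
There is a genuine gap: the step size $\eps_i \asymp \sqrt{10^{-i}}$ is too large by a factor of roughly $\sqrt{d}$, and with it the renormalization penalty dominates the first-order gain, so even on the "good" event your guess would become \emph{worse}, not better. Concretely, decompose $\vg = g_\parallel \vx + \vg_\perp$ with $\vg_\perp \perp \vx$ and write $\zstar = (1-z_0)\vx + \sqrt{2z_0-z_0^2}\,\vy$ with $z_0 = f(\vx)-f(\zstar)$ and $\vy$ the unit tangential direction toward $\zstar$. Then $\ip{\vx-\eps\vg,\zstar} = (1-z_0)(1-\eps g_\parallel) - \eps\sqrt{2z_0-z_0^2}\,\ip{\vg_\perp,\vy}$, and dividing by $\norm{\vx-\eps\vg}_2 = \sqrt{1 - 2\eps g_\parallel + \eps^2}$ and Taylor-expanding, the $g_\parallel$-linear terms cancel and one is left with
\begin{align*}
\ip{\xplus - \vx,\ \zstar} \ \approx\ -\tfrac{\eps^2}{2}(1-z_0)\ -\ \eps\sqrt{2z_0}\,\ip{\vg_\perp,\vy}.
\end{align*}
On the good event $-\ip{\vg_\perp,\vy} \gtrsim 1/\sqrt{d}$ (Lemma~\ref{lem:random-progress}), the favorable term is $\asymp \eps\sqrt{z_0}/\sqrt{d}$ while the penalty is $\asymp \eps^2$. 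With your $\eps \asymp \sqrt{z_0}$ these are $\asymp z_0/\sqrt{d}$ and $\asymp z_0$ respectively, so the penalty wins by a factor of $\sqrt{d}$. You flag in your ``main obstacle'' paragraph that the first-order gain must dominate the renormalization correction, but at your step size it provably does not; your claim that the ratio beats $\rho/d$ ``comfortably by a $\sqrt{d}$ factor'' is an artifact of dropping the dominant loss term. Making the favorable term dominate forces $\eps \lesssim \sqrt{z_0}/\sqrt{d}$; the paper takes $\eps \asymp z_0/\sqrt{d}$ (via $s = z/(10\sqrt{d-1})$), under which the favorable term is $\asymp z_0^{3/2}/d$ and the penalty $\asymp z_0^2/d$ is strictly smaller since $z_0 < 1$. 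This is exactly why $p = 1.5$ appears: the achievable gain is $\asymp z_0^{3/2}/d$, not $z_0/\sqrt{d}$.

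A secondary and more cosmetic difference: rather than perturb by a full-sphere random $\vg$ and renormalize, the paper perturbs only within the $(d-1)$-dimensional subspace orthogonal to $\vx$, forming $(\sqrt{1-s^2}, s_1, \dots, s_{d-1})$, which lands on $\S_2^{d-1}$ by construction and has no $g_\parallel$ term to track. Your renormalization route can be made to work once the step size is fixed (as the algebra above shows, $g_\parallel$ cancels to first order), but it adds bookkeeping for no gain. The rest of your outline (using Lemma~\ref{lem:random-progress}, the interval cover with $r=10^{-1}$, using $\ip{\zstar,\vx}>0$ to guarantee $z_0 \le 1$ and $\norm{\zstar - \ip{\vx,\zstar}\vx}_2 \asymp \sqrt{z_0}$, and the $O(d)$ sampling cost) matches the paper.
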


\begin{proof}[Proof of Lemma \ref{lem:inner-prod-dists}]
We explain the construction of the distribution $\cD(\vx)$.

Impose the following coordinates on $\R^d$. Let the first coordinate $x_1$ be the direction of $\vx$, and the remaining $d-1$ coordinates be an arbitrary coordinate system for the perpendicular directions. Then, $\vx$ has coordinates $(1,0\ldots)$. Next, let $z \coloneqq 10^{-(i+1)}$ and $s:=\frac{z}{10\sqrt{d-1}}$. Let $\vs$ be a point randomly drawn from a $d-1$ dimensional sphere of radius $s$ whose coordinates are denoted $s_1\ldots s_{d-1}$. Then, the distribution $\cD\inparen{\vx}$ is the distribution of $(\sqrt{1-s^2},s_1\ldots s_{d-1})$. It is easy to verify that these points lie on $\mathbb{S}_2^{d-1}$.

This distribution can be computed in time $O(d)$. We will now show that it is a $(1.5,10^{-1},10^{-4})$ progress distribution.

Let $z_0 \coloneqq f(\vx)-f(\zstar)$; recall that $z_0 \in [0,1]$ and $z \le z_0 \le 10z$. We write the target vector $\zstar=(1-z_0)\vx+\sqrt{1-(1-z_0)^2}\vy = (1-z_0)\vx+\sqrt{2z_0-z_0^2}\vy$ where $\vy$ is a unit vector and $\ip{\vy,\vx}=0$. Note that this expression holds because $\langle \zstar, \vx\rangle = 1-z_0$.

Let $\xplus$ be a random point chosen from $\cD(\vx)$. Let $y_1\ldots y_{d-1}$ be the coordinates of $\vy$ in the $d-1$ dimensional coordinate plane perpendicular to $\vx$ defined above. We compute
\begin{align} \label{eq:xxplus}
    \langle \zstar, \xplus \rangle
    =&~\inparen{1-z_0,\sqrt{2z_0-z_0^2}y_1\ldots \sqrt{2z_0-z_0^2}y_{d-1}}\cdot \inparen{\sqrt{1-s^2},s_1\ldots s_{d-1}} \\
    =&~(1-z_0)\sqrt{1-s^2}+\inparen{\sqrt{2z_0-z_0^2}y_1\ldots \sqrt{2z_0-z_0^2}y_{d-1}} \cdot (s_1\ldots s_{d-1}).
\end{align}
By Lemma~\ref{lem:random-progress}, we have (note that we weaken the constants from Lemma \ref{lem:random-progress} for numerical convenience later in the proof)
\begin{align*}
    \prvv{\vs}{\inparen{\sqrt{2z_0-z_0^2}y_1\ldots \sqrt{2z_0-z_0^2}y_{d-1}} \cdot (s_1\ldots s_{d-1}) \geq \frac{0.1}{\sqrt{d-1}}\cdot \sqrt{2z_0-z_0^2} \cdot s} \ge 0.1.
\end{align*}
Because $2z_0-z_0^2\geq z_0\geq z$, we have
\begin{align*}
     \frac{0.1}{\sqrt{d-1}}\cdot \sqrt{2z_0-z_0^2} \cdot s \geq \frac{s\sqrt{z}}{10\sqrt{d-1}}.
\end{align*}
In turn, this shows
\begin{align*}
    \prvv{s}{\inparen{\sqrt{2z_0-z_0^2}y_1\ldots \sqrt{2z_0-z_0^2}y_{d-1}} \cdot (s_1\ldots s_{d-1}) \geq \frac{s\sqrt{z}}{10\sqrt{d-1}}} \ge 0.1.
\end{align*}
Combining this with Equation~\ref{eq:xxplus}, we obtain
\begin{align} \label{eq:xstarxplus}
    \prvv{\xplus \sim \cD_{\vx}}{\langle \zstar, \xplus \rangle \ge (1-z_0)\sqrt{1-s^2}+\frac{s\sqrt{z}}{10\sqrt{d-1}}} \ge 0.1.
\end{align}
Now, we will find a lower bound for $(1-z_0)\sqrt{1-s^2}+\frac{s\sqrt{z}}{10\sqrt{d-1}}$. We have
\begin{align*}
    (1-z_0)\sqrt{1-s^2}+\frac{s\sqrt{z}}{10\sqrt{d-1}} &\geq (1-z_0)(1-s^2)+\frac{s\sqrt{z}}{10\sqrt{d-1}} \\
    &= (1-z_0)(-s^2)+\frac{s\sqrt{z}}{10\sqrt{d-1}} + (1-z_0) \\
    &\geq (1-z)(-s^2)+\frac{s\sqrt{z}}{10\sqrt{d-1}} + (1-z_0).
\end{align*}
Now, using that $s = \frac{z}{10\sqrt{d-1}}$, we get 
\begin{align*}
    (1-z)(-s^2)+\frac{s\sqrt{z}}{10\sqrt{d-1}} + (1-z_0)
    &= (1-10s\sqrt{d-1})(-s^2)+\frac{s^{3/2}}{\sqrt{10}(d-1)^{1/4}} + (1-z_0)\\
    &= -s^2+10s^3\sqrt{d-1}+\frac{s^{3/2}}{\sqrt{10}(d-1)^{1/4}} + (1-z_0) \\
    &\geq \left(10s^3\sqrt{d-1}+\frac{s^{3/2}}{8(d-1)^{1/4}}-s^2\right) \\
    &\quad\quad+\frac{s^{3/2}}{6(d-1)^{1/4}} + (1-z_0)
\end{align*}
where the last line follows from $\nfrac{1}{\sqrt{10}} > \nfrac{1}{8}+\nfrac{1}{6}$. Finally, applying weighted AM-GM lets us see
\begin{align*}
    10s^3\sqrt{d-1}+\frac{s^{3/2}}{8(d-1)^{1/4}}\geq \frac{3}{2^{2/3}}\inparen{10s^3\sqrt{d-1}}^{1/3}\inparen{\frac{s^{3/2}}{8(d-1)^{1/4}}}^{2/3} = \frac{3\cdot 10^{1/3}}{2^{2/3}2^{2}}s^2 > s^2
\end{align*}
where we use a weight of $\nfrac{1}{3}$ on the first term and a weight of $\nfrac{2}{3}$ on the second term. We now write
\begin{align*}
    (1-z_0)\sqrt{1-s^2}+\frac{s\sqrt{z}}{10\sqrt{d-1}} \geq \frac{s^{3/2}}{6(d-1)^{1/4}} + (1-z_0).
\end{align*}
Substituting $s$ once again and recalling that $\langle \zstar, \vx\rangle = (1-z_0)$ and $z \ge \frac{z_0}{10} = \frac{\langle \zstar, \zstar - \vx\rangle}{10}$, we get
\begin{align*}
    (1-z_0)\sqrt{1-s^2}+\frac{s\sqrt{z}}{10\sqrt{d-1}} \geq \frac{z^{3/2}}{6\cdot 10^{3/2} \cdot d} + \langle \zstar, \vx \rangle > \frac{10^{-4}\langle \zstar, \zstar - \vx\rangle^{3/2}}{d} + \langle \zstar, \vx \rangle.
\end{align*}
Combining this with \eqref{eq:xstarxplus}, we now have
\begin{align*}
    &\prvv{\xplus \sim \cD\inparen{\vx}}{\langle \zstar, \xplus \rangle \ge \frac{10^{-4}\langle \zstar, \zstar - \vx\rangle^{3/2}}{d} + \langle \zstar, \vx\rangle} \ge 0.1
\end{align*}
which means that
\begin{align*}
    \prvv{\xplus \sim \cD\inparen{\vx}}{\langle \zstar, \xplus - \vx \rangle \ge \frac{10^{-4}\langle \zstar, \zstar - \vx\rangle^{3/2}}{d}} \ge 0.1.
\end{align*}
This exactly aligns with the definition of a $(1.5, 10^{-1}, 10^{-4})$ progress distribution, completing the proof of Lemma \ref{lem:inner-prod-dists}.
\end{proof}

We will now conclude Theorem~\ref{cor:bin_search_ip} using Theorem~\ref{thm:bin_search_general}.

\begin{proof}[Proof of Theorem~\ref{cor:bin_search_ip}]
To apply Theorem \ref{thm:bin_search_general}, we need to present $C$, $r$, and a sequence of parameterizations $\cD_i$ that satisfy the premises.

Set $r=0.1$ and $C=10$. By Lemma~\ref{lem:inner-prod-dists}, we can find progress distributions for each interval $[Cr^i,Cr^{i-1}]$ of suboptimality of the current function value, since we can find such progress distributions as long as the suboptimality of the function is at most $1$. 

Note that the algorithm can begin with a point $\vx$ where $f(\vx)-f(\zstar)<1$ by first querying two opposite points on a sphere; one can easily see that at least one of the two points queried satisfies $f(\vx)-f(\zstar)<1$.

We therefore conclude the proof of Theorem \ref{cor:bin_search_ip}.
\end{proof}

\subsection{Proof of Theorem \ref{cor:bin_search_pl}}
\label{subsec:pl_proof}

The goal of this subsection is to prove Theorem \ref{cor:bin_search_pl}.

As before, we use the general guarantee of Theorem \ref{thm:bin_search_general} via proving that there is an appropriate interval cover and corresponding sequence $\cD_i(\vx)$ of progress distributions for all $\vx$ for which $f(\vx)-f(\zstar) \in \insquare{Cr^{i+1},Cr^i}$.

We prove this fact in Lemma \ref{lemma:pl_prog_distro}.

\begin{lemma}
\label{lemma:pl_prog_distro}
Fix $i \in \N_{\ge 1}$. Let $\eps_i = \sqrt{\nfrac{2B \alpha}{\beta^2}} \cdot \nfrac{1}{\sqrt{d}}\cdot 2^{-i/2-1}$. If $f$ is $\beta$-smooth and $\alpha$-P\L{}, and if we have $f(\vx) - f(\zstar) \in B \cdot \insquare{2^{-i}, 2^{-i+1}}$, then the distribution $\cD_i(\vx) = \vx + \eps_i \cdot \mathsf{Unif}(\S_2^{d-1})$ is a $(1,\gamma,\rho)$-progress distribution for $(\gamma,\rho) = \inparen{\nfrac{1}{8}, \nfrac{\alpha}{8\beta}}$.
\end{lemma}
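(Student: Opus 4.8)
The plan is to use $\beta$-smoothness to argue that the random perturbation $\xplus$ behaves like a gradient step with constant probability, and then to use the $\alpha$-P\L{} inequality to convert the resulting gain into a multiplicative decrease of the suboptimality gap. Throughout, write $\Delta \coloneqq f(\vx) - f(\zstar)$, so that by hypothesis $\Delta \in B \cdot [2^{-i}, 2^{-i+1}]$; in particular $\Delta > 0$, so $\vx$ is not a minimizer and (via $\alpha$-P\L{}) $\nabla f(\vx) \neq \zero$.

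First I would draw $\xplus = \vx + \eps_i \vu$ with $\vu \sim \mathsf{Unif}(\S_2^{d-1})$. By $\beta$-smoothness (Definition~\ref{defn:smoothness}), $f(\xplus) \le f(\vx) + \eps_i \ip{\nabla f(\vx), \vu} + \tfrac{\beta}{2}\eps_i^2$. Since $\mathsf{Unif}(\S_2^{d-1})$ is symmetric about the origin, I may replace $\vu$ by $-\vu$ and invoke Lemma~\ref{lem:random-progress} with $\vy = \nabla f(\vx)/\norm{\nabla f(\vx)}_2$ (a unit vector, as $\nabla f(\vx)\neq\zero$): with probability at least $\nfrac18$ we get $-\ip{\nabla f(\vx), \vu} \ge \norm{\nabla f(\vx)}_2/(2\sqrt d)$, hence
\[
f(\vx) - f(\xplus) \;\ge\; \frac{\eps_i \norm{\nabla f(\vx)}_2}{2\sqrt d} - \frac{\beta}{2}\eps_i^2 .
\]
Applying $\alpha$-P\L{} (Definition~\ref{defn:pl}) in the form $\norm{\nabla f(\vx)}_2 \ge \sqrt{2\alpha\Delta}$, this lower bound is at least $h(\eps_i)$, where $h(a) \coloneqq \frac{a\sqrt{2\alpha\Delta}}{2\sqrt d} - \frac{\beta}{2}a^2$ is a downward parabola in $a$. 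So it suffices to show $h(\eps_i) \ge \frac{\rho}{d}\Delta = \frac{\alpha}{8\beta}\cdot\frac{\Delta}{d}$, which exactly verifies the $(1, \nfrac18, \nfrac{\alpha}{8\beta})$-progress condition of Definition~\ref{defn:prog}.

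Finally, I would plug in the prescribed step size. The parabola $h$ is increasing on $[0,\eps^\star]$ and maximized at $\eps^\star \coloneqq \sqrt{2\alpha\Delta}/(2\beta\sqrt d)$, with value $h(\eps^\star) = \alpha\Delta/(4\beta d)$; a direct computation gives $\eps_i/\eps^\star = \sqrt{B 2^{-i}/\Delta} \in [\nfrac{1}{\sqrt2},\, 1]$ using $\Delta \in B[2^{-i}, 2^{-i+1}]$. Hence $h(\eps_i) \ge h(\eps^\star/\sqrt2) = \left(\tfrac{1}{2\sqrt2} - \tfrac18\right)\cdot\frac{\alpha\Delta}{\beta d} > \tfrac18 \cdot \frac{\alpha\Delta}{\beta d}$, which is what we need. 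The only delicate point is this last calibration: $\eps_i$ must be small enough that the smoothness error $\tfrac{\beta}{2}\eps_i^2$ does not swallow the linear gain $\tfrac{\eps_i\sqrt{2\alpha\Delta}}{2\sqrt d}$, yet large enough that the net progress stays $\Omega(\Delta/d)$ over the entire suboptimality window $B[2^{-i}, 2^{-i+1}]$ — the stated $\eps_i$ (with its $2^{-i/2-1}$ factor) is chosen precisely so that $\eps_i$ lands in $[\eps^\star/\sqrt2, \eps^\star]$, and everything else is routine algebra.
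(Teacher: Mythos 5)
Your proof is correct and uses the same ingredients as the paper's: $\beta$-smoothness to Taylor-bound $f(\xplus)$, Lemma~\ref{lem:random-progress} for the constant-probability angle alignment, and the $\alpha$-P\L{} inequality to relate $\norm{\nabla f(\vx)}_2$ to the suboptimality $\Delta$. The only organizational difference is that the paper argues by a case split --- if $\eps_i > \norm{\nabla f(\vx)}_2/(2\beta\sqrt d)$ then P\L{} shows $\Delta$ is already below the phase's range --- whereas you invoke P\L{} directly as $\norm{\nabla f(\vx)}_2 \ge \sqrt{2\alpha\Delta}$ and then verify $\eps_i \in [\eps^\star/\sqrt2, \eps^\star]$ by the parabola calibration; these are logically equivalent routes to the same bound $\rho = \alpha/(8\beta)$.
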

\begin{proof}[Proof of Lemma \ref{lemma:pl_prog_distro}]
Let $\vg \coloneqq \vx - \xplus$.

It is sufficient to consider the case where we have $\norm{\vg}_2 \le \frac{1}{2\beta} \cdot \frac{\norm{\nabla f(\vx)}_2}{\sqrt{d}}$. To see this, suppose this is not the case. We apply the P\L{} inequality and write
\begin{align*}
    f(\vx) - f(\zstar) \le \frac{1}{2\alpha}\norm{\nabla f(\vx)}_2^2 \le d \cdot \frac{2\beta^2}{\alpha}\norm{\vg}_2^2 = d \cdot \frac{2\beta^2}{\alpha} \inparen{ \sqrt{\frac{2B \alpha}{\beta^2 d}} \cdot \frac{1}{2^{i/2+1}}}^2 = \frac{B}{2^{i}}
\end{align*}
which implies that the suboptimality $f(\vx) - f(\zstar)$ does not belong to the range we are considering.  

Next, we use Lemma \ref{lem:random-progress} to write the below.
\begin{align*}
    \prvv{\vg}{\ip{\frac{\nabla f(\vx)}{\norm{\nabla f(\vx)}_2},\frac{\vg}{\norm{\vg}_2}} \ge \frac{1}{2\sqrt{d}}} \ge \frac{1}{8}.
\end{align*}
By Definition \ref{defn:smoothness}, we have for a $\beta$-smooth function and for any $\vx, \vy \in \R^d$ that
\begin{align*}
    \abs{f(\vx)-f(\vy) - \ip{\nabla f(\vy), \vx-\vy}} \le \frac{\beta}{2} \cdot \norm{\vx-\vy}_2^2,
\end{align*}
from which it easily follows that
\begin{align*}
    \abs{f(\vx-\vg)-f(\vx)+\ip{\nabla f(\vx), \vg}} \le \frac{\beta}{2} \cdot \norm{\vg}_2^2.
\end{align*}
The above rearranges to
\begin{align*}
    f(\vx) - f(\vx-\vg) &\ge \ip{\nabla f(\vx),\vg} - \frac{\beta}{2} \cdot \norm{\vg}_2^2 \\
    &= \norm{\nabla f(\vx)}_2 \cdot \norm{\vg}_2 \cdot \inparen{\ip{\frac{\nabla f(\vx)}{\norm{\nabla f(\vx)}_2},\frac{\vg}{\norm{\vg}_2}} - \frac{\nfrac{\beta}{2} \cdot \norm{\vg}_2}{\norm{\nabla f(\vx)}_2}} \\
    &\ge \frac{\beta}{2} \cdot \norm{\vg}_2^2\quad\quad\text{with probability } > \nfrac{1}{8}
\end{align*}
We therefore conclude that with probability $> \gamma \coloneqq \nfrac{1}{8}$,
\begin{align*}
    f(\vx)-f(\vx-\vg)\ge \frac{\beta}{2} \cdot \norm{\vg}_2^2 = \frac{\beta}{2} \cdot \inparen{ \sqrt{\frac{2B \alpha}{\beta^2 d}} \cdot \frac{1}{2^{i/2+1}}}^2 = \frac{\alpha}{\beta} \cdot \frac{B}{d} \cdot \frac{1}{2^{i+2}}.
\end{align*}
This means that
\begin{align*}
    \frac{f(\vx)-f(\vx-\vg)}{f(\vx) - f(\zstar)} \ge \frac{
     \nfrac{\alpha}{\beta} \cdot \nfrac{B}{d} \cdot \nfrac{1}{2^{i+2}}
    }{\nfrac{B}{2^{i-1}}} = \frac{\alpha}{\beta} \cdot \frac{1}{8d}
\end{align*}
which means we can take $\rho = \nfrac{\alpha}{8\beta}$. This concludes the proof of Lemma \ref{lemma:pl_prog_distro}.
\end{proof}

The proof of Theorem \ref{cor:bin_search_pl} follows very easily from Lemma \ref{lemma:pl_prog_distro}.

\begin{proof}[Proof of Theorem \ref{cor:bin_search_pl}]
Our plan is to apply Theorem \ref{thm:bin_search_general}. To do so, we need to present $C$, $r$, and a sequence of $\cD_i$ that satisfy the premise of Theorem \ref{thm:bin_search_general}. We will use the settings of these objects guaranteed by Lemma \ref{lemma:pl_prog_distro}.

Let $C = 2B$ and $r = \nfrac{1}{2}$. It is clear that the intervals given by Lemma \ref{lemma:pl_prog_distro} cover $[0,B]$, and so for every $i \ge 1$, there exists a corresponding $\inparen{1,\nfrac{1}{8},\nfrac{\alpha}{8\beta}}$-progress distribution family $\cD_i$. We now apply Lemma \ref{lemma:pl_prog_distro} along with Theorem \ref{thm:bin_search_general} to conclude the proof of Theorem \ref{cor:bin_search_pl}.
\end{proof}

\subsection{Proof of Theorem \ref{cor:bin_search_dist}}
\label{subsec:dist_proof}

In this subsection, we prove Theorem \ref{cor:bin_search_dist}. 

Again, we present an appropriate interval cover and corresponding sequence of progress distributions $\cD_i(\vx)$ that satisfy the conditions of Theorem \ref{thm:bin_search_general}. See Lemma \ref{lemma:dist_prog_distro}.

\begin{lemma}
\label{lemma:dist_prog_distro}
Fix $i \in \N_{\ge 1}$. Let $\eps = \nfrac{1}{\sqrt{d}} \cdot 2^{-i/2}$. If $\myfunc{f}{\cB_2^d}{\R}$ is $f(\vx) = \norm{\vx-\zstar}_2$ for $\zstar \in \cB_2^d$ and if $\norm{\vx-\zstar}_2 \le \sqrt{2} \cdot \insquare{2^{-(i+1)/2},2^{-i/2}}$, then there exists a distribution $\cD(\vx)$ that can be efficiently sampled from and is a $(1,\gamma, \rho)$-progress distribution for $(\gamma,\rho) = (\nfrac{1}{8},\nfrac{1}{8})$.
\end{lemma}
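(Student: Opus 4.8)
The plan is to produce, for every phase $i$, an explicit progress distribution and then invoke Theorem~\ref{thm:bin_search_general}, exactly as in Lemmas~\ref{lem:inner-prod-dists} and~\ref{lemma:pl_prog_distro}. The natural candidate is the spherical perturbation of Lemma~\ref{lemma:pl_prog_distro}, \emph{projected back into the feasible ball}:
\[
  \cD(\vx) \;\coloneqq\; \Pi_{\cB_2^d}\!\inparen{\vx + \eps\cdot\vu}, \qquad \vu \sim \mathsf{Unif}\inparen{\S_2^{d-1}},
\]
with $\Pi_{\cB_2^d}$ the Euclidean projection onto $\cB_2^d$; this is samplable in $O(d)$ time (draw a Gaussian, normalize, scale by $\eps$, add to $\vx$, clip the norm to $1$). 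Since $f(\vx)=\norm{\vx-\zstar}_2$ is not differentiable at $\zstar$, I would not invoke Theorem~\ref{cor:bin_search_pl} as a black box but instead verify the progress-distribution property directly, using the exact geometry of the distance function. To keep every queried point in $\cX=\cB_2^d$, I would also have the algorithm carry its running guess as $\Pi_{\cB_2^d}(\xstar_t)$ rather than $\xstar_t$; since projection onto a convex set containing $\zstar$ is non-expansive, this only decreases $f$.

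The key reduction --- and the point where the constraint set demands care --- is that because $\zstar\in\cB_2^d$ and $\Pi_{\cB_2^d}$ is $1$-Lipschitz,
\[
  f\bigl(\Pi_{\cB_2^d}(\vx+\eps\vu)\bigr) \;=\; \norm{\Pi_{\cB_2^d}(\vx+\eps\vu)-\Pi_{\cB_2^d}(\zstar)}_2 \;\le\; \norm{\vx+\eps\vu-\zstar}_2 ,
\]
so it suffices to prove the progress bound for the \emph{un-projected} perturbation $\vx+\eps\vu$, with the projection only helping. From here I would analyze $\norm{\vx+\eps\vu-\zstar}_2$ alone.

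For the core estimate, set $D\coloneqq\norm{\vx-\zstar}_2=f(\vx)-f(\zstar)$ and $\vw\coloneqq(\vx-\zstar)/D$, and expand $\norm{\vx+\eps\vu-\zstar}_2^2 = D^2 + 2\eps D\ip{\vu,\vw} + \eps^2$. By Lemma~\ref{lem:random-progress} and the sign-symmetry of $\mathsf{Unif}(\S_2^{d-1})$, with probability at least $\tfrac18$ we have $\ip{\vu,\vw}\le-\tfrac{1}{2\sqrt d}$, and on that event $\norm{\vx+\eps\vu-\zstar}_2^2 \le D^2 - \tfrac{\eps D}{\sqrt d} + \eps^2$. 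Substituting $\eps=\tfrac{1}{\sqrt d}2^{-i/2}$ and the hypothesis that $D$ lies in the prescribed interval $\sqrt2\cdot[2^{-(i+1)/2},2^{-i/2}]$ lets one bound $\norm{\vx+\eps\vu-\zstar}_2^2 \le \bigl(D-\tfrac{\rho D}{d}\bigr)^2$ with $\rho=\tfrac18$, i.e.\ $f(\vx)-f(\xplus)\ge\tfrac{1}{8d}\bigl(f(\vx)-f(\zstar)\bigr)$ with probability $\ge\tfrac18$, which is exactly the assertion that $\cD(\vx)$ is a $(1,\tfrac18,\tfrac18)$-progress distribution. Then Theorem~\ref{thm:bin_search_general}, applied with $r=1/\sqrt2$ and the matching $C$ and with an easily obtained initial point of suboptimality at most the ball's diameter, yields Theorem~\ref{cor:bin_search_dist}.

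The one genuinely delicate point is this last estimate: the linear ``descent'' term $\tfrac{\eps D}{\sqrt d}$ must dominate the quadratic ``overshoot'' term $\eps^2$ uniformly over the whole interval of suboptimalities in a phase --- in particular near its lower endpoint, where $\eps$ is largest relative to $D$ and a careless step would move \emph{away} from $\zstar$. This is precisely why the interval carries the $\sqrt2$ factor, which buys a definite constant-factor gap between $\sqrt d\,\eps$ and $D$; if the constants turn out slightly too tight to land on $\rho=\tfrac18$ exactly, one shrinks $\eps$ or widens the interval's lower endpoint by an absolute constant, which does not affect the asymptotics handed to Theorem~\ref{thm:bin_search_general}. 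The remaining ingredients --- efficient sampling, the contraction argument, and checking that the intervals cover $[0,B]$ --- are routine.
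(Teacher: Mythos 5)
Your high-level plan is the same as the paper's: use the spherically-perturbed step $\vx - \eps\vu$ projected onto $\cB_2^d$, reduce to the unprojected step via non-expansiveness of projection (since $\zstar \in \cB_2^d$), and verify the progress-distribution inequality. But you diverge in how you verify it, and that is where a genuine gap appears.

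You expand $\|\vx + \eps\vu - \zstar\|_2^2 = D^2 + 2\eps D\ip{\vu,\vw} + \eps^2$ and, after applying Lemma~\ref{lem:random-progress}, need the descent term $\eps D/\sqrt d$ to beat the overshoot $\eps^2$ by an $\Omega(D^2/d)$ margin uniformly over the interval. Your claim that the $\sqrt2$ factor in the interval ``buys a definite constant-factor gap between $\sqrt d\,\eps$ and $D$'' is false: the interval $\sqrt 2\cdot[2^{-(i+1)/2},2^{-i/2}]$ is exactly $[2^{-i/2},2^{-(i-1)/2}]$, so at the lower endpoint $D = 2^{-i/2} = \sqrt d\,\eps$ and the two terms cancel exactly, giving zero progress rather than $\Omega(D/d)$. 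So as written the bound $\|\vx+\eps\vu-\zstar\|_2^2 \le (D - \rho D/d)^2$ does not hold for $D$ near the bottom of the interval; you do hedge by saying constants might need adjusting, and indeed halving $\eps$ (or shrinking the interval's lower endpoint) repairs it — but the argument as stated has a hole precisely at the endpoint you flagged.

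The paper sidesteps this more cleanly: it observes that $g(\vx)=\|\vx-\zstar\|_2^2$ is $2$-smooth and $2$-P\L{} and invokes Lemma~\ref{lemma:pl_prog_distro} on $g$, then passes the resulting bound through the projection and takes a square root. The crucial point is that Lemma~\ref{lemma:pl_prog_distro} contains exactly the dichotomy you are missing: it assumes $\|\vg\|_2 \le \tfrac{1}{2\beta}\tfrac{\|\nabla f(\vx)\|_2}{\sqrt d}$ because the complementary case forces, via the P\L{} inequality, that the suboptimality lies \emph{below} the current interval, contradicting the hypothesis. That assumption builds in a factor-of-two cushion between the inner-product term $1/(2\sqrt d)$ and the overshoot term $\beta\|\vg\|_2/(2\|\nabla f(\vx)\|_2)\le 1/(4\sqrt d)$, which is precisely the margin your direct expansion lacks. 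Either add the analogous case analysis to your direct computation, or — as the paper does — reduce to the squared distance and let Lemma~\ref{lemma:pl_prog_distro} handle it.
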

\begin{proof}[Proof of Lemma \ref{lemma:dist_prog_distro}]
Let $\xplus$ have distribution
\begin{align}
    \frac{\vx - \vg}{\max\inbraces{1, \norm{\vx - \vg}_2}},\quad\quad\text{ where } \vg \sim \eps_t \cdot \mathsf{Unif}(\S_2^{d-1})\label{eqn:projection}.
\end{align}
Note that this distribution can be described as, ``add a uniformly random direction of length $\eps_t$ to $\vx$ and project the result back onto $\actionset = \cB_2^d$.''

It is easy to see that $\norm{\xplus}_2 \le 1$, so the iterates of Algorithm \ref{alg:main_alg} will always remain inside $\cB_2^d$. We now prove that $\cD$ as described above in fact is a $(1,\gamma, \rho)$-progress distribution for the promised parameters.

First, use the fact that $\norm{\vx-\zstar}_2^2$ is $2$-smooth and $2$-P\L{} along with Lemma \ref{lemma:pl_prog_distro} to conclude that
\begin{align*}
    \prvv{\vg}{\frac{\norm{\vx-\zstar}_2^2-\norm{\vx-\vg-\zstar}_2^2}{\norm{\vx-\zstar}_2^2} \ge \frac{2\rho}{d}} \ge \gamma.
\end{align*}
Condition on this event. A basic property of the Euclidean projection onto a convex set implies that
\begin{align*}
    \norm{\xplus-\zstar}_2^2 \le \norm{(\vx-\vg)-\zstar}_2^2
\end{align*}
which yields
\begin{align*}
    \prvv{\vg}{\frac{\norm{\vx-\zstar}_2^2-\norm{\xplus-\zstar}_2^2}{\norm{\vx-\zstar}_2^2} \ge \frac{2\rho}{d}} \ge \gamma.
\end{align*}
Finally, observe that the above event implies
\begin{align*}
    \inparen{\frac{\norm{\xplus-\zstar}_2}{\norm{\vx-\zstar}_2}}^2 \le \inparen{\sqrt{1-\frac{2\rho}{d}}}^2 \le \inparen{1-\frac{\rho}{d}}^2.
\end{align*}
Taking the square root of both sides and rearranging concludes the proof of Lemma \ref{lemma:dist_prog_distro}.
\end{proof}

We remark that the above proof goes through if $\cX$ is an arbitrary convex set; we simply replace \eqref{eqn:projection} with $\Pi_{\cX}(\vx - \vg)$, where $\Pi_{\cX}(\vz)$ is the Euclidean projection of $\vz$ onto $\cX$.

Now, the proof of Theorem \ref{cor:bin_search_dist} will follow in a very similar manner to that of Theorem \ref{cor:bin_search_pl}.

\begin{proof}[Proof of Theorem \ref{cor:bin_search_dist}]
To apply Theorem \ref{thm:bin_search_general}, we need to present $C$, $r$, and a sequence of distribution parameterizations $\cD_i$ that satisfy the premises.

Let $\vx_1 = 0$. It is clear that $\norm{\zstar}_2 \le 1 = B$, which means that the intervals of the form $\sqrt{2} \cdot \insquare{2^{-(i+1)/2},2^{-i/2}}$ for $i \ge 1$ cover the interval $[0,1]$. Hence, for every $i \ge 1$, there exists a corresponding $(1,\nfrac{1}{8},\nfrac{1}{8})$-progress distribution. Theorem \ref{cor:bin_search_dist} follows immediately.
\end{proof}
\section{Proofs of lower bound results} \label{sec:lower-bound}

In this section, we will prove Theorem \ref{thm:large_m_lower} and Corollary \ref{corr:inner-prod-lower-bound}. We first state the following well-known fact (see, e.g., \cite{vershynin2018high}) that there exist $2^{\Omega(d)}$ points inside the unit $\ell_2$ ball which are sufficiently far apart from one another.

\begin{fact}
\label{fact:eps_separated_set}
There exists a subset $S \subset \cB_2^d$ such that $\abs{S} = 2^{\Omega(d)}$, and for all $\vx, \vy \in S$ such that $\vx \neq \vy$, we have $\norm{\vx-\vy}_2 \ge 0.1$.
\end{fact}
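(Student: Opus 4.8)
The plan is to prove Fact~\ref{fact:eps_separated_set} by a standard volumetric packing argument. I would let $S \subseteq \cB_2^d$ be a \emph{maximal} set of points that are pairwise at Euclidean distance at least $0.1$; such a maximal set exists, since $\cB_2^d$ is bounded and any $0.1$-separated set in it is finite, so one can build $S$ greedily. The pairwise-distance requirement then holds by construction, and the only thing left to verify is that maximality forces $\abs{S} = 2^{\Omega(d)}$.

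The first step is a covering observation: by maximality, every point $\vy \in \cB_2^d$ must lie at distance strictly less than $0.1$ from some point of $S$, for otherwise $S \cup \{\vy\}$ would still be a valid $0.1$-separated set, contradicting maximality. Hence $\cB_2^d \subseteq \bigcup_{\vx \in S} B(\vx, 0.1)$, where $B(\vx, 0.1)$ is the open ball of radius $0.1$ centered at $\vx$. The second step is to compare volumes: writing $V_d$ for the volume of the unit Euclidean ball in $\R^d$, each $B(\vx,0.1)$ has volume $(0.1)^d V_d$, so the covering gives $V_d \le \abs{S}\cdot(0.1)^d V_d$, i.e.\ $\abs{S} \ge 10^d = 2^{d \log_2 10}$, which is $2^{\Omega(d)}$. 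This completes the proof.

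There is essentially no obstacle here; the only mild care needed is (i) noting that the covering balls $B(\vx,0.1)$ may protrude outside $\cB_2^d$, which is harmless since we only need them to cover $\cB_2^d$ and the volume bound still goes through; and (ii) the open-versus-closed-ball distinction, which does not affect the volume computation. As an alternative construction that avoids the volume argument entirely, one could invoke the Gilbert--Varshamov bound: a binary code of length $d$, size $2^{\Omega(d)}$, and minimum Hamming distance at least $d/10$, rescaled by the map $u \mapsto u/\sqrt{d}$, yields $2^{\Omega(d)}$ points in $\cB_2^d$ whose pairwise $\ell_2$ distances are at least $2\sqrt{1/10} > 0.1$; either route suffices.
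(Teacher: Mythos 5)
Your volumetric argument is correct and complete: the paper itself states this as a known fact and defers to Vershynin's book, where the proof is exactly your maximal-packing-implies-covering plus volume-comparison argument, so you are taking the standard (and the intended) route. The only quibble is a harmless arithmetic slip in your Gilbert--Varshamov aside (the rescaled $\{0,1\}^d$ codewords give pairwise distance $\sqrt{1/10}$, not $2\sqrt{1/10}$, unless you embed into $\{\pm 1\}^d/\sqrt{d}$), which still exceeds $0.1$ and so does not affect anything.
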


We are now ready to prove Theorem \ref{thm:large_m_lower}.

\begin{proof}[Proof of Theorem \ref{thm:large_m_lower}]
We actually prove the lower bound even when the adversary must return the item \emph{in the list} with smallest function value (breaking ties consistently, e.g., according to lexicographic order). Since the adversary is only weaker in this case, this implies the lower bound for the monotone adversary.

By Yao's Lemma~\cite{yao77}, it suffices to give a distribution over instances such that every deterministic algorithm satisfies the conclusions of the theorem. Hence, choose $S$ from Fact \ref{fact:eps_separated_set} and let $\zstar$ be sampled uniformly from $S$.

Fix any deterministic algorithm. The deterministic algorithm branches into at most $m$ states every round, depending on the response the adversary gives. Therefore after $r:=\lfloor\log_m{|S|}\rfloor - 1$ rounds, the algorithm has at most $m^r<\frac12|S|$ distinct states. Each of these states $Q$ can be represented as a tuple of the form $\inbraces{(\xt{1}, \cdots, \xt{m}, i_t) }_{t \in [r]}$, where the $\xt{i}\in \mathcal{X}$ and the $i_t \in [m]$, which represents a set of the algorithm's guesses as well as the closest-point responses for the first $r$ rounds. 

\paragraph{Cost lower bound.} Let us denote $c_r(\zstar, Q)$ to be the total cost incurred for the state $Q$ if the target is $\zstar \in S$. We claim that all but at most one $\zstar \in S$ have $c_r(\zstar,Q)>0.05r$. Suppose there were two points $\zstar$ and ${\zstar}'$ which had $c_r(\zstar, Q) < 0.05r$. Then $c_r(\zstar, Q) + c_r({\zstar}', Q) < 0.1r$, so there exists some round $t \in [r]$ for which
\begin{align*}
\max\inbraces{\norm{\xt{1} - \zstar}, \cdots, \norm{\xt{m} - \zstar}} + \max\inbraces{\norm{\xt{1} - {\zstar}'}, \cdots, \norm{\xt{m} - {\zstar}'}} < 0.1.
\end{align*}
However, this cannot hold by triangle inequality since $\zstar$ and ${\zstar}'$ are well-separated.

For any target $\zstar$, the cost paid in the first $r$ steps is at least $c_r(\zstar,Q(\zstar))$, where $Q(\zstar)$ is the state of the algorithm after $r$ rounds when the target is $\zstar$. In particular, it is of the form $c_r(\zstar, Q)$ for some $Q$. Since there are only $\frac12|S|$ possible algorithm states, at most $\frac12|S|$ values of $\zstar$ can have total cost less than $0.05r$. Therefore, the average cost over instances uniformly drawn from $S$ must be at least 
\[
    \frac{1}{|S|}\left(|S| -\frac12|S|\right)\cdot 0.05r \geq 0.025\left(\frac{\log{S}}{\log{m}}-2\right) = \Omega(\nfrac{d}{\log m}).
\]

\paragraph{Iteration lower bound.} We will use the cost lower bound to prove the iteration lower bound. Recall that we proved that for any algorithm, there existed an instance $\zstar$ for which the algorithm incurs $\Omega(\nfrac{d}{\log m})$ cost over the first $r$ rounds.

Now suppose we had an algorithm $\cA$ which achieved an expected iteration complexity of finding an $\eps$-optimal point of $C \cdot \nfrac{d}{\log m}$ for any $\zstar \in S$, where $\eps, C > 0$ are sufficiently small numerical constants. We can convert this into a low-cost algorithm $\cA'$ for the first $r$ rounds that (1) runs $\cA$ to find an $\eps$-optimal point $\vx$; then (2) until round $r$ repeatedly suggests $\xt{1} = \cdots = \xt{m} = \vx$. The expected cost of algorithm $\cA'$ for the first $r$ rounds is at most
\begin{align*}
    2 \cdot \frac{C d}{\log m} + \eps \cdot r \le \inparen{2C + \eps} \frac{ d}{\log m}.
\end{align*}
For sufficiently small $\eps$ and $C$, we have a contradiction with the previous cost lower bound; thus we can conclude that any algorithm must perform $\Omega\inparen{\nfrac{d}{\log m}}$ iterations in expectation to find an $\eps$-optimal point $\vx$.

This concludes the proof of Theorem \ref{thm:large_m_lower}.
\end{proof}

\begin{proof}[Proof of Corollary \ref{corr:inner-prod-lower-bound}]
The argument for linear $f$ is a reprise of the lower bound for $\ell_2$ distance. Observe that Fact \ref{fact:eps_separated_set} implies that the points in $S$ also satisfy $\ip{\vx, \vy} \le 0.995$ for any $\vx \ne \vy$. 

Therefore, we again use Yao's Lemma and consider deterministic algorithms that branch into $m$ states in every round. Letting $c_r(\zstar, Q)$ denote the total cost incurred for state $Q$ if the target is $\zstar$, we again have the claim that all but at most one $\zstar \in S$ have $c_r(\zstar, Q) > C \cdot r$ for some constant $C > 0$, from which it follows that at most $\frac12|S|$ values of $\zstar$ can have total cost less than $C\cdot r$. We conclude that the average cost over instances drawn uniformly from $S$ must be $\Omega\inparen{\nfrac{d}{\log m}}$.

The argument for the iteration lower bound also proceeds similarly, so we omit the details. 

This concludes the proof of Corollary \ref{corr:inner-prod-lower-bound}.
\end{proof}

\subsection*{Acknowledgements}
AB is supported by NSF Awards CCF-2212968
and ECCS-2216899 and by the Defense Advanced Research Projects Agency under cooperative
agreement HR00112020003. MG is supported by NSF Graduate Research Fellowship. GL is supported by the Institute for Data, Econometrics, Algorithms, and Learning (IDEAL). NSM is supported by NSF Graduate Research Fellowship and NSF Award ECCS-2216899. YY is supported by NSF Award CCF-2045402 and NSF Award CCF-2019844. YY thanks her advisor Jamie Morgenstern for her continued support and encouragement.

\printbibliography



\end{document}